\newif \ifllncs
\newcommand{\dan}[1]{}
\newcommand{\bool}{\{0,1\}}
\newtheorem{definition}{Definition} 
\newtheorem{theorem}{Theorem}
\newtheorem{lemma}{Lemma}
\newtheorem{corollary}{Corollary}
\newcommand{\A}{\mathcal{A}}
\newcommand{\ignore}[1]{}
\newcommand{\s}{\sigma}
\newcommand{\la}{\leftarrow}
\def\blackslug
\def\qed{\hspace*{\fill}\quad\blackslug\lower 8.5pt\null\par}
\newcommand{\poly}{\sf{poly}}
\newenvironment{proofsketch}{\noindent{\em Proof sketch.~~}}{\qed}
\newlength{\protowidth}
\newlength{\gameheight}
\newcommand{\secparam}{\lambda}
\newcommand{\csecparam}{\kappa}
\renewcommand{\paragraph}[1]{{\smallskip \noindent {\bf #1}~}}
\newcommand{\negl}{{\sf negl}}
\renewcommand{\poly}{{\sf poly}}
\newcommand{\FACTS}{\textsf{FACTS}}
\newcommand{\mc}[1]{\mathcal{#1}}
\newcommand{\setup}{\textsf{Setup}}
\newcommand{\sendee}{\textrm{send}}
\newcommand{\receiveee}{\textrm{receive}}
\newcommand{\send}{\textsf{SendMsg}}
\newcommand{\receive}{\textsf{RcvMsg}}
\newcommand{\complain}{\textsf{Complain}}
\newcommand{\audit}{\textsf{Audit}}
\newcommand{\orig}{\textsf{Originate}}
\DeclareMathOperator{\Enc}{\textsf{Enc}}
\DeclareMathOperator{\Dec}{\textsf{Dec}}
\DeclareMathOperator{\Sig}{\textsf{Sig}}
\DeclareMathOperator{\Ver}{\textsf{Ver}}
\renewcommand{\tag}{\mathsf{tag}}
\newcommand{\tagg}{\mathsf{tag}}
\def\eem{\mathsf{EEMS}}
\def\cp{\mathsf{user\mbox{-}privacy}}
\def\sp{\mathsf{server\mbox{-}privacy}}
\def\uf{\mathsf{unforgeability}}
\def\nd{\mathsf{no-delay}}
\def\nsu{\mathsf{no-speedup}}
\def\GameSP{\mathsf{Game}_\eem^{\sp}(\mc{A})}
\def\AdvSP{\mathsf{Adv}_\eem^{\sp}(\mc{A})}
\def\GameCP{\mathsf{Game}_\eem^{\cp}(\mc{A})}
\def\AdvCP{\mathsf{Adv}_\eem^{\cp}(\mc{A})}
\def\GameUF{\mathsf{Game}_\eem^{\uf}(\mc{A})}
\def\GameND{\mathsf{Game}_\eem^{\nd}(\mc{A})}
\def\GameNSU{\mathsf{Game}_\eem^{\nsu}(\mc{A})}
\def\Expect{\mathbb{E}}
\def\USED{\mathrm{USED}}
\newcommand{\falling}[2]{\ensuremath{{#1}^{\underline{#2}}}}
\def\@IEEEpubidpullup{6.5\baselineskip}\makeatother
\title{Fighting Fake News in Encrypted Messaging with the Fuzzy Anonymous Complaint Tally System (FACTS)}
\author{
\IEEEauthorblockN{Linsheng Liu\IEEEauthorrefmark{1}, Daniel S.\ Roche\IEEEauthorrefmark{2}, Austin Theriault\IEEEauthorrefmark{1}, Arkady Yerukhimovich\IEEEauthorrefmark{1}}
%\vspace{0.05in}
\IEEEauthorblockA{\IEEEauthorrefmark{1}George Washington University \emph{lls,atheriault,arkady@gwu.edu}}
\IEEEauthorblockA{\IEEEauthorrefmark{2}United States Naval Academy \emph{roche@usna.edu}}
%\vspace{0.05in}
}
\begin{document}

\maketitle
\begin{abstract}
    Recent years have seen a strong uptick in both the prevalence and real-world consequences of false information spread through online platforms. At the same time, encrypted messaging systems such as WhatsApp, Signal, and Telegram, are rapidly gaining popularity as users seek increased privacy in their digital lives. 
    The challenge we address is how to combat the viral spread of misinformation without compromising privacy. Our FACTS system tracks user complaints on messages obliviously, only revealing the message's contents and originator once sufficiently many complaints have been lodged.
    Our system is \emph{private}, meaning it does not reveal anything about the senders or contents of messages which have received few or no complaints; \emph{secure}, meaning there is no way for a malicious user to evade the system or gain an outsized impact over the complaint system; and \emph{scalable}, as we demonstrate excellent practical efficiency for up to millions of complaints per day.
    Our main technical contribution is a new \emph{collaborative counting Bloom filter}, a simple construction with difficult probabilistic analysis, which may have independent interest as a privacy-preserving randomized count sketch data structure.
    Compared to prior work on message flagging and tracing in end-to-end encrypted messaging, our novel contribution is the addition of a high \emph{threshold} of multiple complaints that are needed before a message is audited or flagged.
    We present and carefully analyze the probabilistic performance of our data structure, provide a precise security definition and proof, and then measure the accuracy and scalability of our scheme via experimentation.
\end{abstract}
\section{Introduction}
The proliferation of fake and misleading information online has had significant impact on political discourse~\cite{NYTimes:IsaRoo18} and has resulted in violence~\cite{WAPo:Samuels20}. Large services like Facebook and YouTube have begun to remove or label content that they know to be fraudulent or misleading~\cite{Facebook:fakeNews, Youtube:fakeNews}, through a combination of a manual process of reviewing posts/videos and automated machine learning techniques.

However, on end-to-end encrypted messaging services (EEMS), like Signal, WhatsApp, Telegram, etc., where so-called ``fake news'' is also shared, such review is impossible. At no point do the providers see the plain-text, unencrypted contents of messages transmitted through their systems and thus cannot identify and remove offending material.  Such platforms must instead rely on their users to identify and report malicious content.  Even then, identifying and removing \emph{users} who repeatedly post misleading and dangerous content may still be difficult because some platforms, like Signal, also hide the path the message took, so identifying and addressing the original source of the misinformation may not be possible.

Tyagi et al.~\cite{CCS:TyaMieRis19} introduced a first approach for overcoming this challenge and allow EEMS to effectively \emph{traceback} an offending  message to find the originator based on a user complaint.  The traceback procedure also assures that all other messages remain private and that innocent parties cannot be blamed for originating the offending messages.

While innovative, there are two notable shortcomings of Tyagi et al.'s traceback scheme. First, it requires extensive ``housekeeping'' on the part of the platform that scales as the number of messages in the system. Second, a single, possibly malicious, complaint can trigger a traceback and thus reveal the message contents as well as the history of prior recipients, which is counter to  the goals of EEMS to maintain the privacy of users communicating through this system. One malicious user (e.g., a government agent) can reveal the source of a piece of information (e.g., a leak) that they have received, violating the privacy of the sender (e.g., the leaker) by issuing a single complaint to the EEMS. While it may be possible to apply manual review to these complaints, the scale of possible complaints could make this impractical. 

In this paper, we aim to resolve this conflict between privacy and ability to identify misinformation in EEMSs by first observing that ``fake news'' messages are, by definition, viral and are thus received, and likely complained about, by a large number of users.  Private messages, such as leaks, on the other hand, are likely to be targeted and are thus only received by a small number of users; indeed, any message received by only a few users is inherently less impactful overall and more likely deserving of privacy protections.  This leads to a more nuanced approach for identifying fake news: apply a threshold approach to complaint management, whereby only viral fake news would overcome the threshold and trigger an audit. 

Counting the number of complaints in a private manner is a non-trivial problem if the privacy of the EEMS' clients is to be maintained prior to the threshold being reached, even given available cryptographic solutions. For example, a homomorphic encryption solution (e.g.,~\cite{EC:KatMyeOst01}) would enable the checking and updating of counts for each message, but the access patterns of clients checking and updating counters could reveal how many complaints a message receives even if the threshold is not reached. Oblivious RAM  (ORAM) (e.g.~\cite{JACM:GolOst96, JACM:SDSCFRYD18}) could be used to protect the access patterns, but have high computational overheads and usually assume clients may share secrets and are not malicious.
Pricate Information Retrieval (PIR) does not assume clients are trusted, but has different scalability challenges and does not address the problem of obliviously updating without revealing which message is being complained about.
%but ORAMS incur a $O(\log n)$ overhead per operation where $n$ is the number of messages in the system that could be complained about, which can count in the millions. Additionally, ORAMs are single-client storage mechanisms, so using such a scheme would require 
%communications between clients (to share state information) and offer limited or no protection against malicious clients. 

We propose a different approach we call a  Fuzzy Anonymous Complaint Tally System (FACTS).  FACTS maintains an (approximate) counter of complaints for each message, while also ensuring that, until a threshold is exceeded, the status of these counters is kept private from the server and all users who have not received the message.  FACTS builds on top of any end-to-end encrypted messaging platform, incurring only small overhead for message origination and forwarding.  In particular, FACTS maintains the communication pattern of the underlying messaging system, requiring no new communication or secrets between users even for issuing complaints.   

To avoid the high overheads of existing solutions, FACTS uses a novel oblivious data structure we call a \emph{collaborative counting Bloom filter} (CCBF).  This data structure allows us to obliviously increment and query approximate counters on millions of messages while only requiring 12MB of storage.  Moreover, incrementing a counter only requires flipping \emph{one bit} on the server and only uses the minimal communication of $\log{|T|}$ bits to address a single bit in the server-stored bit vector $T$.  While the resulting counters are only approximate, we show experimentally and analytically that we are able to enforce the threshold on complaints with good accuracy, namely,  below 10\% error in theory, and below 3\% in most realistic deployment scenarios.   

The contributions of this paper are as follows:
\begin{itemize}
    \item We develop a collaborative counting Bloom filter, a new oblivious data structure for counting occurrences of a large number of distinct items.
    \item We use this data structure to instantiate a provably-secure system, FACTS, for privacy-preserving source identification of fake news in EEMSs.
    \item Finally, we perform experiments to show the accuracy and overhead of FACTS in realistic deployment scenarios.
\end{itemize}

\subsection{Setting and Goals}
FACTS is built on top of an 
end-to-end encrypted messaging system (EEMS).  For this work, we focus on the setting of server-based EEMSs with a server $S$ that enables (authenticated) encrypted communication between the system users.  Examples of such EEMSs include Signal and WhatsApp, among many others.  

To make sure that FACTS is compatible with existing encrypted messaging systems, we make the following performance requirements:

\begin{enumerate}
    \item \textbf{Messaging costs:}  Originating and forwarding messages should incur little computational overhead for both users and the server over the standard procedure in the encrypted messaging system,
    \item \textbf{Server storage:} The storage overhead of the server should be small (i.e., a single table not exceeding a few MBs),
    \item \textbf{User costs and requirements:}
    Issuing complaints requires a small amount of communication and computation from the complaining user, and no cost to other users.  Moreover, complaints can not require direct communication between users or require the users to have any apriori shared secrets that are not known to the server.
    % \item No direct communication between clients should be required as part of the complaint process,
    \item \textbf{Complaint throughput:}  Issuing complaints may be slower than standard forwarding of messages, but the system must be able to handle millions of complaints per day. 
    % \item There should be no apriori shared secrets between all clients (i.e., the received message is the only secret)
    % \item Clients should be able to issue complaints without much space or computation overhead (i.e., a mobile device can do it.)
\end{enumerate}

To ensure privacy of messages and complaints, FACTS requires that complaints remain hidden from the server (and colluding clients) until a threshold of complaints is reached.  Additionally, FACTS ensures integrity of the complaint process ensuring correctness of complaint counts and the identity of the revealed originator once the threshold is reached. Specifically, FACTS satisfies the following security guarantees:

\begin{enumerate}
    \item \textbf{Message privacy:} All messages remain end-to-end encrypted and private from the server and non-receiving clients until a threshold of complaints is reached and an audit is issued.  Moreover, even after the audit, only information about the audited message is revealed.  
    % Finally, FACTS inherits the message privacy of the messaging system, i.e., if the messaging platform hides sender identity (e.g., by using sealed-sender in Signal) then so does FACTS.
    % \anote{The above paragraph needs to be rewritten.  I don't think we want to highlight what happens with hiding sender identity and the claim about not revealing the path on audit is false.}
    
    \item \textbf{Originator integrity:}  Once a threshold of complaints is reached on a message, FACTS will only identify information about the true originator of the message.  In particular, no innocent party can be framed as the originator.
    
    \item \textbf{Complaint privacy:} The server and any colluding clients who have not received a message $x$ should have no information about the number of complaints on $x$.  In particular, the server should not be able to tell what message is being complained about.
    
    \item \textbf{Complaint integrity:}  A set of malicious clients should not be able to alter the number of complaints on any message $x$.  Specifically, they cannot block or delay complaints, and cannot (significantly) increase the number of complaints on a message $x$ except through the legitimate complaint process.
\end{enumerate}

% To add:
% \begin{itemize}
%     \item Inherit security of underlying messaging system.  If we start with sealed-sender, we get sealed-sender
%     \item Don't reveal entire messaging pattern even after ``traceback''
%     \item Don't reveal originator or content of message, just reveal $H(m)$
%     \item If originator is revealed, there is no way to blame someone else
%     \item Fuzzy Anonymous Complaint Tracking (FACT)
% \end{itemize}

\subsection{Building FACTS}
Recall that our goal is to enable privacy-preserving counters to tally complaints on each message $m$.  This suggests an immediate solution where the server stores an encrypted counter for each message, and clients interact with the server to increment the counter and check the threshold.  While implementing such counters is certainly possible using homomorphic encryption~\cite{STOC:Gentry09} or standard secure computation techniques~\cite{FOCS:Yao86, STOC:GolMicWig87, STOC:BenGolWig88} , the problem is that the access pattern of clients' updates to counters leaks information to the server by revealing the complaint histogram.  
This suggests a further modification to store the counters inside an oblivious RAM (ORAM)~\cite{JACM:GolOst96} to hide such access patterns from the client.  However, in our setting this would require a multi-client ORAM~\cite{maffei-msmcoram,concuroram,MOSE} which incurs significant performance penalties including at least $O(\log{n})$ communication overhead when there are $n$ distinct messages.  Moreover, this would require direct communication between clients to maintain their ORAM state, and additionally, no security against malicious clients.

In FACTS, we take a different approach.  Instead of relying on encryption to hide the counters from the server, we hide the counters in plain sight by mixing together the counters for all the messages in a way oblivious to the server.  To make this possible, we relax the functionality of FACTS to only enforce approximate, rather than exact, thresholds.  That is, the threshold will be triggered on a message $x$ after $(1\pm \epsilon) t$ complaints for a small error $\epsilon$.
Making this relaxation allows us to use a sketch-based approach for counting the complaints.  

To achieve this functionality obliviously, we develop a collaborative counting Bloom filter (CCBD).
This data structure consists (roughly) of a collection of Bloom filters, one for each message, where the Bloom filters corresponding to different messages are mixed together to hide them from the server. Specifically, the server stores a table of $s$ bits.  A random subset of $v$ bits ($V_x$) is assigned to each message $x$ at origination; these bits will be used for tracking complaints about this message (for intuition, one can think of these bits as forming a Bloom filter for storing the set of complaints about the message).  We stress that the server has no information about which bits correspond to which messages. 

To complain about a message $x$, a user who has received $x$ can find the corresponding bit locations, and will (attempt to) flip one of the bits from 0 to 1.  However, allowing users to flip any bit they choose, would allow malicious users to significantly accelerate complaints for a message they wish to disclose.  To prevent this behavior, we restrict each client to only be able to flip (i.e., complain on) a small (of size $u$) set of locations $U_C$.  Thus, to complain about a message $m$, a client first identifies the set $V_x$ of bits corresponding to $x$.  Then, she checks how many of these bits have already been set to 1, and if this exceeds a specified threshold, notifies the server to trigger an audit.  If the threshold for $x$ is not yet exceeded, the client sees whether any of the 0 bits in $V_x$ are in her set $U_C$, and if there are any such bits, she flips one of them (chosen at random) from 0 to 1.  Otherwise, the user still flips a random bit in their set $U_x$, so the server cannot discern anything about the message being complained on. We prove in Section~\ref{sec:ccbf} that the actual number of complaints necessary to trigger an audit can be calculated with high precision allowing us to (approximately) enforce the desired threshold.

\subsection{Limitations of FACTS}
In order to present FACTS, it is also important to recognize what our system does \emph{not} do.
%We leave addressing these limitations as interesting future work.  

First, unlike some prior work~\cite{ICS:Geiger16},%
\cite{km-private-membership-usenix21}, FACTS does not attempt to automatically detect misinformation.  Instead, it relies on users reporting it when they see it.
This reliance on users has inherent benefits and limitations.
While our system is not subject to the kinds of machine-generated false positives
that can arise from, e.g., hash collisions
\cite{brandom-verge-apple-csam},
our model is inherently vulnerable to any sufficiently large group of dishonest users, who could trigger an audit on a benign message.
This is why we suggest the possibility of a manual human review process on message contents before the service provider would take any action on an audited message; see \cref{sect:third-party-audits}.

Second, due to the approximate nature of FACTS, it works most effectively for relatively large thresholds, say in the hundreds and above.  For our application to fake news detection, this is reasonable as such messages are likely to garner a large number of complaints, and indeed this was our main motivation for this paper. We leave as interesting possible future work to implement a system supporting smaller thresholds, even as small as 2, efficiently.

One additional functionality limitation is that, as is true with any application using Bloom filters, the CCBF data structure can fill up once too many complaints have been registered.  To deal with this issue it is necessary to periodically reset the counters and refresh the CCBF data structure.  We refer to each such refresh period as an \emph{epoch}, and in the remainder of the paper only present algorithms for a single epoch.

Finally, on the security side, an important limitation is that FACTS reveals meta-data on who issues complaints (but not what message they complain on).  It is important to consider what is revealed by this meta-data.  By observing the timing of messages and complaints, the server can make some inferences about what messages users are sending and complaining about.  For example, suppose that the server sees that $A$ sends a message to $B$, and then $B$ issues a complaint. Then, it may be reasonable for the server to assume that $A$ has sent the message which $B$ complained about, even though this is not directly leaked by our system. Nonetheless, our definition guarantees that the server cannot be certain that this is indeed the case.  We note that the messaging meta-data is already a byproduct of the underlying EEMS platform.  FACTS only adds complaint meta-data to this leakage; see \cref{sec:altFACTS} for some further discussion.

\subsection{Paper Layout}
The remainder of the paper is organized as follows.  In Section~\ref{sec:prelim}, we introduce some of the notation we use throughout the paper.  Then, in Section~\ref{sec:facts} we describe the syntax and functionality of FACTS.  In Section~\ref{sec:ccbf} we present and analyze our main building block, the CCBF data structure.  Then, in Section~\ref{sec:construction}, we show how to use a CCBF to instantiate FACTS.  We demonstrate the accuracy and performance through experimental evaluation in Section~\ref{sec:experiments} and then prove the security of FACTS in Section~\ref{sec:proofs}.  Finally, we describe some variants of FACTS and directions for future work in Section~\ref{sec:altFACTS} and present related work in Section~\ref{sec:related}.

\section{Preliminaries} \label{sec:prelim}
We use $[n]$ to denote the set $1, \ldots, n$.  We write $x \gets X$ to indicate that the value $x$ is sampled uniformly at random from the set $X$.  We use $\secparam$ to denote a statistical security parameter and $\csecparam$ to denote a computational security parameter.  We also assume the existence of a hash function $H:\{0,1\}^* \rightarrow \{0,1\}^*$ which is modeled as a random oracle.  We let $\poly(\cdot)$ denote a polynomial function and $\negl(\cdot)$ denote a negligible function.  
%Some important variable names and notation are given in Figure~\ref{fig:notation}.

% \begin{figure}
% \begin{tabular}{|c|c|}
% \hline
% Symbol     &  Meaning\\
% \hline
% %$A$ & Sender\\
% %$B$ & Receiver \\
% %$C$ & Complainer \\
% %$O$ & Originator (used sometimes) \\
% &\textbf{FACTS variables}\\
% $S$ & Server \\
% $c$ & Number of users \\
% $x$ & Message or item \\
% $\secparam$ & statistical security parameter \\
% $\csecparam$ & computational security parameter\\
% &\textbf{CCBF variables}\\
% %$h$ & hash of message \\
% %$e$ & encryption of user ID\\
% $T$ & bit vector stored by the server \\
% $s$ & size of T\\
% $n$ & total number of complaints per epoch \\
% $t$ & (explicit) threshold \\
% $u$ & size of user set \\
% $v$ & size of item set\\
% %$\mathsf{tag}$ & tag\\
% %$\tau$ & tipping point / implicit threshold \\
% %$r$ & salt for hash of message \\
% %$m$ & current total number of complaints \\
% %$\ell$ & upper bound of lists \\
% %$w$ & current \# of free slots for a given message \\
% %$C$ & user id \\
% %$x$ & item / message \\
% $U_C$ & set of bit indices for user $C$ \\
% $V_x$ & set of bit indices for item/message $x$ \\
% %$i$ & arbitrary index value used in some proof \\
% %$p$ & probability \\
% %$H$ & Hash function \\
% %$\s$ & Signature \\
% %$L$ & daily limit on number of complaints per party\\
% \hline
% \end{tabular}
% \caption{Notation}\label{fig:notation}
% \end{figure}

\section{Fuzzy Anonymous Complaint Tally System (FACTS)}\label{sec:facts}

In this section, we present the syntax for FACTS and describe how FACTS is used.  We show how to instantiate FACTS in Section~\ref{sec:construction}.

\paragraph{Assumptions:}
We assume that each user $A$ has a unique identifier $ID_A$, and that the server can authenticate these IDs.  (We will abuse notation to use $A$ to represent the user and also the id $ID_A$).  We also assume that the server has an identifier $ID_S$ (we will denote this by $S$) that can be authenticated by all users.

Additionally, we assume that the underlying end-to-end encrypted messaging system (EEMS) offers methods $\sendee(A,B,x)$ and $\receiveee(A,B,x)$ for sending and verifying a message $x$ sent from user $A$ to user $B$.  Moreover, we assume that this communication is encrypted and authenticated. In particular, $\receiveee$ verifies that the received message was sent by $A$ and was not modified in transit.  Importantly, we do not assume that this platform is anonymous, instead assuming that the full messaging history i.e., who sent a message to whom and the size of that message, is available to the server. 

\paragraph{Syntax:}
FACTS is a tuple of protocols $\FACTS=(\setup, \send, \receive, \complain, \audit)$.  The first is used to set up FACTS, the next two are used to send and verify messages, while the last two methods are used to issue complaints and audit received messages.

\begin{itemize}
    \item $\setup(c)$:  This takes as input the total number of users and initiates the FACTS scheme for $c$ users.
    
    \item $\send(A, B,\tag_x,x)$: This method is used by a user $A$ to send a message $x$ to another user $B$.  This may be a new message \emph{originated} by $A$ (indicated by $\tag_x = \perp$) or a forward of a previously received message.

    \item $\receive(A,B,\tag_x, x)$:  This algorithm is run by $B$ upon receiving a message $(\tag_x, x)$ from $A$.
    
    This algorithm checks whether $\tag_x$ is indeed a valid tag generated by $A$ on message $x$.  If this is the case, then $B$ accepts the message, otherwise he rejects the message.
    
    \item $\complain(C, \tag_x, x)$:  This protocol is run by a user $C$ to complain about a received message $(\tag_x,x)$.  
    
    \item $\audit(C,\tag_x, x)$:  This protocol issues an audit of a message $x$ revealing $(\tag_x, x)$ to $S$.  This will be called by $C$ when the number of complaints on $m$ exceeds a pre-defined threshold (with high probability).
\end{itemize}

\paragraph{Usage:}
%\anote{Not sure we really need this paragraph.}
The following workflow demonstrates the standard usage of FACTS.  To originate a new message $x$, a user $A$ runs the $\send$ protocol with the server $S$ to create metadata $\tag_x$.  $\send$ then sends this metadata and the message $(\tag_x,x)$
to the receiving user $B$ using the messaging platforms $\sendee$ method.  Upon receiving a message $(\tag_x, x)$, $B$ first locally runs $\receive(A,B,\tag_x,x)$ to verify that the received message and tag are valid, if this fails he ignores the message.  To forward a received message $(\tag_x,x)$, a user $A$ runs $\send$ with the server $S$ to produce metadata $\tag'_x$; $A$ then discards this metadata, and the original message $(\tag_x, x)$ is sent instead using the messaging platform's $\sendee$ method.\footnote{We note that since the underlying messaging scheme is encrypted, the actual ciphertext sent will not be the same as the ciphertext received.}

If a user $B$ receives a message $(\tag_x, x)$ that it considers ``fake'', he can use the  $\complain$ protocol to issue a new complaint on this message.  After issuing a complaint, $B$ checks whether the threshold of complaints on $x$ has been reached.  If so, he calls $\audit$ to trigger an audit on the message $(\tag_x,x)$, revealing $x$ and the originator of $x$ to the server $S$.

% \subsection{Security of FACTS}
% We now provide definitions for the security achieves by FACTS.  As discussed previously, we aim to achieve the following security goals.  
% \begin{itemize}
%     \item Privacy of messages and complaints should guarantee that the server $S$ together with a set of colluding clients should not learn anything about the contents of any message they have not received. Additionally, privacy of complaints should guarantee that the server (and colluding clients) should not learn any information about the pattern of complaints until an audit on a particular message is triggered.
%     \item Integrity of originator guarantees that upon an audit only the originator of the message or another malicious party will be identified as an originator
%     \item Finally, integrity of complaints guarantees that a malicious client who has not received $m$ cannot significantly speed up or slow down complaints for $m$.
% \end{itemize}

\section{Collaborative counting Bloom filter}\label{sec:ccbf}

Our system records complaints in a special data structure which we
call a \emph{collaborative counting Bloom filter}, or CCBF. This data structure
shares some of the same basic functionality as a counting Bloom filter~\cite{TON:FCAB00, PODC:Mitzenmacher01} or count-min sketch~\cite{JAL:CorMut05}, which is to insert elements and
compute the (approximate) frequency of a given element.

Our CCBF differs from a usual count-min sketch in that each update
operation is accompanied by a \emph{user id}, and each user can only perform
a single update for a given element. This can be thought of as a strict generalization
of the normal count-min sketch operations, where the latter may be simulated by our
CCBF by choosing a unique user id for each update.

The actual data structure for the CCBF is also far simpler than the 2D array
of integers used for a count-min sketch; instead, we store only a single length-$s$
bit vector $T$. As a result, our CCBF will have the following desirable properties:

\begin{itemize}
    \item The bit-length of $T$ scales linearly with the total
    number of insertions.
    \item Each witness operation (insertion) changes exactly one bit in the underlying
    bit vector from 0 to 1.
    \item The CCBF is \emph{item-oblivious}, meaning that after observing
    an interactive update protocol, the adversary learns which user id made the update,
    but not which item was updated.
\end{itemize}

The downside to our CCBF is a far lower accuracy of the count operation in general compared to count-min sketches.
However, we will show that, for careful parameter choices, the count operation is highly
accurate within a certain range, which is precisely what is needed for the
current application.

\subsection{CCBF Construction}

The CCBF consists of a single size-$s$ bit vector $T$ and two operations:

\newcommand{\Increment}{\ensuremath{\mathsf{Increment}}}
\newcommand{\TestCount}{\ensuremath{\mathsf{TestCount}}}

\begin{itemize}
    \item $\Increment(x, C)$: Increases the count by 1 for item $x$
    according to user id $C$.
    \item $\TestCount(x, t)$: Returns true if the number of increments performed so far for item $x$ is \emph{probably} greater than or equal to $t$.
\end{itemize}

Note that $\TestCount$ is probabilistic, in the sense that it may return false
when the actual count is greater than $t$, or true when the actual count is less
than $t$. Our construction guarantees the correctness probability is always at least
$\tfrac{1}{2}$, and our tail bounds below show the correctness probability quickly
goes towards $1$ when the actual count is much smaller or larger than $t$.

The performance and accuracy of the CCBF is governed by three integer parameters
$s$, $u$, and $v$, with $u,v\le s$,
which must be set at construction time. The first, $s$, is the fixed size of the
table $T$. Each user $i$ is randomly assigned a static set of exactly $u$ locations in
the $T$; i.e., a uniformly random subset of $\{0,1,2,\ldots,s-1\}$, which we
call the \emph{user set}.
Similarly, each possible item $x$ is assigned a random set of exactly $v$
bit vector locations, which we call the \emph{item set}.

The two CCBF operations can be implemented by a single server and any number of clients. The protocols are simple and straightforward, save for the calculation of the \emph{tipping point} $\tau$ which we present in the next subsection.

In these protocols, the size-$s$ bit vector $T$ is considered \emph{public} or \emph{world-readable}; it is known by all parties at all times. In reality, the server who actually stores $T$ may send it to the client periodically, or whenever a client initiates a \Increment{} or \TestCount{} protocol. However, the bit vector $T$ is only writable by the server.

The $\Increment(x,C)$ protocol, outlined in \cref{alg:increment}, involves the User attempting to set a single bit from 0 to 1 within the item set for $x$. However, the user is only allowed to write locations within their own user set. So, if there are no 0 bits in the intersection of these two index sets, the user instead changes any other arbitrary 0 bit in its own user set in order to maintain item obliviousness.

\begin{algorithm}[htp]
    \caption{$\Increment(x,C)$\label{alg:increment}}
\begin{enumerate}
    \item User and server separately compute the list of $u$ user locations for user $C$, $U_C \subseteq \{0,\ldots,s-1\}$.
    \item User computes list of $v$ item locations for item $x$,
   $V_x \subseteq \{0,\ldots,s-1\}$
   \item User checks each location in $U_C$ in the table $T$ to compute a list $S_C = \{i\in U_C \mid T[i] = 0\}$ of \emph{settable} locations for user $C$
   \item If $S_C = \emptyset$, then the user cannot proceed and calls \textbf{abort}.
   \item Else if $S_C \cap V_x \ne \emptyset$, user picks a uniformly random index $i\gets S_C \cap V_x$ and sends index $i$ to server.
   \item Else user picks a random index $i\gets S_C$ and sends index $i$ to server.
   \item Server checks that received index $i$ is in the user set $U_C$ and that $T[i] = 0$, then sets $T[i]$ to 1.
\end{enumerate}
\end{algorithm}

Since the bit vector $T$ is considered world-readable, the only communication here is the single index $i$ from client to server over an authenticated channel. In reality, to avoid race conditions, the server will actually send the table entry values $T[i]$ for all $i\in U_C$ to the user first and lock the state of the global bit vector $T$ until receiving the single index response back from the user.

The $\TestCount(x,t)$ protocol is not interactive as it only requires reading the entries of $T$. The precise computation of the \emph{tipping point} $\tau$ is detailed in the next section. Note that this computation depends only on the \emph{total} number of bits set in the bit vector $T$ as well as the parameters $s,u,v$; therefore the computation of $\tau$ is independent of the item $x$ and could for example be performed once by the server and saved without violating item obliviousness.

This protocol is detailed in \cref{alg:testcount}.

\begin{algorithm}[htp]
    \caption{$\TestCount(x,t)$\label{alg:testcount}}
\begin{enumerate}
    \item Use parameters $s,u,v$ and current value of $m$ total number of bits set in $T$, to compute the tipping point $\tau$.
    \item Compute list of $v$ item locations for item $x$, $V_x \subseteq \{0,\ldots,s-1\}$
    \item Check how many bits of $T$ are set for indices in $V_x$. Return \textbf{true} if and only if this count is greater than or equal to $\tau$.
\end{enumerate}
\end{algorithm}

\subsection{Calculating the tipping point}

The key to correctness of the \TestCount{} protocol is a calculation of the \emph{tipping point} $\tau$, which is the expected number of 1 bits within any item set, if that item has been incremented $t$ times. We now derive an algorithm to compute this expected value exactly, in $O(tv)$ time and $O(v)$ space.

Let $s$ be the total size of the table $T$ and $m\le s$ be the total number of calls to \Increment{} so far. That is, $m$ equals the number of 1 bits in $T$.
Recall that $u,v \le s$ are the number of table entries per user and per item, respectively.

We first derive the probability that two subsets of the $s$ slots have given-size intersection.
Next we derive a recursive formula for $\tau$ using these intersection probabilities.
The nearest integer to $\tau$ can then be efficiently computed using a simple dynamic
programming strategy.

\paragraph{Intersection probabilities}

For the remainder, we use Knuth's notation $\falling{n}{k}$ to denote the \emph{falling factorial},
defined by
\[\falling{n}{k} = \frac{n!}{(n-k)!} = n\cdot(n-1)\cdot(n-2)\cdots(n-k+1).\]

\begin{lemma}\label{lem:intersect}
Let $k,a,b,s$ be non-negative integers with $k \le b \le a \le s$, and
suppose $S$ and $T$ are two subsets of a size-$s$ set with $|S|=a$ and $|T|=b$, each
chosen independently and uniformly over all subsets with those sizes. Then
\begin{equation}\label{eqn:intersect}
\Pr(|S \cap T| = k) = 
    \frac{\falling{a}{k} \cdot{} \falling{b}{k} \cdot{} \falling{(s-a)}{b-k}}
         {\falling{s}{b} \cdot{} k!}.
\end{equation}
\end{lemma}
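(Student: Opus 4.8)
The plan is to recognize that $|S\cap T|$ follows the standard hypergeometric distribution and to establish the stated closed form by a direct counting argument designed to produce falling factorials rather than binomial coefficients. Since $S$ and $T$ are chosen independently, I would first fix an arbitrary subset $S_0$ of the ground set with $|S_0|=a$ and compute $\Pr(|S_0\cap T|=k)$ over the random choice of $T$ alone. Because this quantity will not depend on the particular choice of $S_0$, it equals the unconditional probability $\Pr(|S\cap T|=k)$.

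The key idea is to sample $T$ not as an unordered $b$-subset but as an ordered sequence of $b$ distinct elements of the $s$-element ground set; there are exactly $\falling{s}{b}$ such sequences, all equally likely. To count the sequences with $|S_0\cap T|=k$, I would: (i) choose which $k$ of the $b$ positions are occupied by elements of $S_0$, in $\binom{b}{k}$ ways; (ii) fill those $k$ positions with distinct elements of $S_0$, in $\falling{a}{k}$ ways; and (iii) fill the remaining $b-k$ positions with distinct elements of the complement of $S_0$, a set of size $s-a$, in $\falling{(s-a)}{b-k}$ ways. Multiplying and dividing by the total count gives
\[
\Pr(|S\cap T|=k)=\frac{\binom{b}{k}\,\falling{a}{k}\,\falling{(s-a)}{b-k}}{\falling{s}{b}},
\]
and substituting $\binom{b}{k}=\falling{b}{k}/k!$ yields exactly \eqref{eqn:intersect}. (As a sanity check, one can compare this with the usual unordered form $\binom{a}{k}\binom{s-a}{b-k}/\binom{s}{b}$.)

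I do not expect any genuine obstacle here: the lemma is an instance of a classical identity. The only points that require care are keeping the ordered count consistent — in particular, pairing the position-selection factor $\binom{b}{k}$ correctly with the fill factors $\falling{a}{k}$ and $\falling{(s-a)}{b-k}$, and tracking the resulting $1/k!$ when converting the binomial back to a falling factorial — and confirming that the hypotheses $k\le b\le a\le s$ make every term well defined. I would also remark that the degenerate case is handled automatically: when $b-k>s-a$ the event is impossible, and indeed $\falling{(s-a)}{b-k}=0$ in that regime, since the falling factorial of a non-negative integer by a strictly larger integer vanishes.
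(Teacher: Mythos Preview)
Your argument is correct and is essentially the same classical counting proof as the paper's: the paper counts unordered pairs $(S,T)$ with a size-$k$ intersection via $\binom{s}{k}\binom{s-k}{a-k}\binom{s-a}{b-k}\big/\bigl(\binom{s}{a}\binom{s}{b}\bigr)$ and then simplifies, whereas you fix $S$ by symmetry and count ordered $b$-sequences for $T$, which lands on the falling-factorial form directly. The difference is purely organizational; both are straightforward derivations of the hypergeometric probability.
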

\begin{proof}
The number of ways to choose $S$ and $T$ with a size-$k$ intersection, divided
by the total number of ways to choose two size-$a$ and size-$b$ sets, equals
\[\frac{\binom{s}{k} \cdot{} \binom{s-k}{a-k} \cdot{} \binom{s-a}{b-k}}
{\binom{s}{a}\cdot{}\binom{s}{b}}.\]
This simplifies to \eqref{eqn:intersect}.
\end{proof}

Because the numerator and denominator are each products of $b+k$ single-precision integers,
the value of \eqref{eqn:intersect} can be computed in $O(b)$ time to full accuracy in
machine floating-point precision.

Furthermore, equation \eqref{eqn:intersect} has the convenient property that, after altering any
value $a$, $b$, or $k$ by $\pm 1$, we can update the probability with only $O(1)$ additional
computation. So, for example, one can compute the probabilities for every $k\le b$ in the same
total time $O(b)$.

\paragraph{Recurrence for number of unfilled message slots}

Fix an arbitrary item $x$, and let $w\le v$ denote the number of 0 bits of $T$ within $x$'s item set. Let $k\le m$ denote the number of \Increment{} operations performed on item $x$ performed so far.

First, for convenience define $p_w$ to be the probability that an arbitrary user is able to
write to one of the $w$ remaining unfilled slots for the message. From \cref{lem:intersect}, we have
\begin{equation}
    p_w = 1 - \frac{\falling{(s-u)}{w}}{\falling{s}{w}},
    \label{eqn:pw}
\end{equation}
which can be computed in $O(w)$ time. In fact, we pre-compute \emph{all} possible values of $p_w$
with $0\le w\le v$ in $O(v)$ total time.

Now consider the random variable for the number of 0 bits within $x$'s item set after $k$ \Increment{}'s on $x$, if the item set originally had $w$ 0 bits.
Define $R_{w,k}$ to be the expected value of this random variable, which can be calculated recursively as follows.

If $w=0$, then the slots are all filled, and if $k=0$ then there are no more \Increment{}'s, so
the number of unfilled slots remains at $w$. Otherwise, the first \Increment{} will fill an additional slot
with probability $p_w$, leaving $w-1$ remaining unfilled slots, and otherwise will leave
$w$ remaining unfilled slots. This implies the following recurrence relation:

\[
R_{w,k} = \begin{cases}
    0,& w=0 \\
    w,& k=0 \\
    p_w R_{w-1,k-1} + (1 - p_w) R_{w,k-1},& w,k\ge 1
\end{cases}
\]

All values of $R_{w,t}$ with $0\le w \le v$ can be computed in $O(tv)$ time and $O(v)$ space,
using a straightforward dynamic programming strategy.

\paragraph{Computing the tipping point}

We now show how to compute the tipping point value $\tau$, which is the expected number of filled item slots
after $t$ \Increment{}s on that item, by summing the $R_{w,t}$ values over all possible
values of $w$ based on the number of \emph{other} calls to \Increment{} $m$.

To this end, define $q_w$ to be the probability that $w\le v$ slots for a given item are unfilled
after $m$ total calls to \Increment{} for other items. Because other calls to \Increment{} are for other
unrelated items, each one goes to a uniformly-random unfilled slot over the entire size-$s$ table $T$.
Therefore $q_w$ is the same as the probability of a size-$m$ set and a size-$v$ set having intersection size exactly $v-w$. From \cref{lem:intersect}, this is
\[q_w = 
    \frac{\falling{m}{v-w} \cdot{} \falling{v}{v-w} \cdot{} \falling{(s-m)}{w}}
         {\falling{s}{v} \cdot{} (v-w)!}.
\]
We can pre-compute all values of $q_w$ for $0\le w\le v$ in total time $O(v)$.

After pre-computing the values of $p_w$, $R_{w,t}$, and $q_w$, we can finally express
the tipping point $\tau$ as a linear combination
\begin{equation}\label{eqn:r}
    \tau = v - \sum_{w=0}^v q_w R_{w,t},
\end{equation}
rounded to the nearest integer.

In total, the computation requires $O(tv)$ time and $O(v)$ space.

\subsection{Tail Bounds}\label{sec:tailbounds}

\newcommand{\CminvOverk}{7.042652}
\newcommand{\CminukOvers}{0.5184846}
\newcommand{\Cplow}{0.956414}
\newcommand{\CmaxvOvers}{0.00386}
\newcommand{\CmaxuvOvers}{3.65151}
\newcommand{\Cphigh}{0.974876}
\newcommand{\CzeroUpper}{3.66567}
\newcommand{\CvLower}{371}
\newcommand{\CboxUpper}{0.00989}
\newcommand{\CminsOvern}{96}
\newcommand{\CmaxvOvert}{7.409}
\newcommand{\CmaxvOvertb}{8}
\newcommand{\Cfour}{1.0520553}
\newcommand{\Cminfp}{2.1}

Next, we prove lower and upper bounds on the probability of filling a single additional item slot during an \Increment{} operation, \cref{lem:lowerp,lem:upperp} respectively.

\begin{restatable}{lemma}{lemmalowerp}\label{lem:lowerp}
    Let $x$ be an item such that at most
    $\tau$ of $x$'s item slots are filled.
    If the CCBF parameters $s,u,v$
    satisfy
    $v \ge \CminvOverk{} \tau$ and
    $u \ge \CminukOvers{} \tfrac{s}{\tau}$,
    then the probability that a call to $\Increment(x,C)$ fills in
    one more of $x$'s item slots is at least
    $\Cplow$.
\end{restatable}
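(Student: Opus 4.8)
The plan is to reduce the event ``$\Increment(x,C)$ fills in one more of $x$'s item slots'' to a set-intersection event and bound its probability with the elementary estimate behind \cref{lem:intersect} / \eqref{eqn:pw}. Let $Z\subseteq\{0,\ldots,s-1\}$ be the set of $0$-bits of $T$ at the moment of the call, let $W = Z\cap V_x$ be the currently unfilled slots of $x$'s item set, and put $w:=|W|$. Reading off \cref{alg:increment}, step~5 is the only step that sets a new bit inside $V_x$, and it runs exactly when $S_C\cap V_x = (U_C\cap Z)\cap V_x = U_C\cap W$ is nonempty (which also rules out the abort in step~4, since then $S_C\supseteq U_C\cap W\neq\emptyset$). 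Hence the probability to be lower-bounded equals $p_w = \Pr[\,U_C\cap W\neq\emptyset\,]$, the quantity of \eqref{eqn:pw}, with $U_C$ a uniformly random $u$-subset; and ``at most $\tau$ slots filled'' means precisely $w\ge v-\tau$.

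Next I would lower-bound $p_w$ by an exponential. From \eqref{eqn:pw},
\[
1-p_w \;=\; \frac{\falling{(s-u)}{w}}{\falling{s}{w}} \;=\; \prod_{j=0}^{w-1}\frac{s-u-j}{s-j} \;\le\; \Bigl(1-\tfrac us\Bigr)^{w} \;\le\; e^{-uw/s},
\]
where the first inequality is the termwise bound $\tfrac{s-u-j}{s-j}\le\tfrac{s-u}{s}$, valid for $j\ge 0$ and $u\le s$ (and if $w>s-u$ then $p_w=1$ trivially). Since $1-e^{-uw/s}$ is increasing in $w$, the bound $w\ge v-\tau$ gives $p_w\ge 1-e^{-u(v-\tau)/s}$.

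Finally I would plug in the parameter hypotheses. From $v\ge\CminvOverk\,\tau$ we get $v-\tau\ge(\CminvOverk-1)\tau$, and from $u\ge\CminukOvers\,\tfrac s\tau$ we get $\tfrac{u(v-\tau)}{s}\ge\CminukOvers(\CminvOverk-1)$. Since $\CminukOvers(\CminvOverk-1)=\CminukOvers\cdot 6.042652\approx 3.133$ and $1-e^{-3.133}\ge\Cplow$, we conclude $p_w\ge\Cplow$, which is the claim.

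Every step here is elementary; there is no genuinely hard step, so the ``main obstacle'' is really just bookkeeping: getting the direction of each inequality right, identifying the worst case $w=v-\tau$ (exactly $\tau$ slots filled), and noting that the three constants $\CminvOverk$, $\CminukOvers$, $\Cplow$ are calibrated so the final numeric inequality $1-\exp(-\CminukOvers(\CminvOverk-1))\ge\Cplow$ holds (with near-equality). One modeling point I would state explicitly: conditioned on the current contents of $T$, we treat $U_C$ as a uniformly random $u$-subset independent of $W$, which is what lets us invoke \eqref{eqn:pw} as written.
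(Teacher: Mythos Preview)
Your proposal is correct and follows essentially the same route as the paper: bound $1-p_w$ by $(1-u/s)^w\le e^{-uw/s}$ via the falling-factorial expression \eqref{eqn:pw}, take the worst case $w=v-\tau$, and plug in the two parameter hypotheses to get the numeric constant. You are in fact slightly more careful than the paper in two places: you justify from \cref{alg:increment} why the success probability is exactly $p_w$, and you make explicit that ``at most $\tau$ filled'' gives $w\ge v-\tau$ with $p_w$ monotone in $w$ (the paper simply writes $w=v-\tau$).
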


\begin{restatable}{lemma}{lemmaupperp}\label{lem:upperp}
    Let $x$ be any item.
    If the CCBF parameters $s,u,v$ satisfy
    $\CvLower{} \le v \le \CmaxvOvers{} s$ and
    $u \le \CmaxuvOvers{} \tfrac{s}{v}$,
    then the probability that a call to $\Increment(x,C)$ fills in
    one more of $x$'s item slots is at most
    $\Cphigh$.
\end{restatable}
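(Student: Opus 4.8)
The plan is to reduce the claim to a bound on $p_v$, the probability that user $C$'s set intersects $x$'s item set, and then to establish $p_v\le\Cphigh$ by a sufficiently precise estimate of the falling-factorial ratio $\falling{(s-u)}{v}/\falling{s}{v}$. First I would observe, by inspecting \cref{alg:increment}, that a call to $\Increment(x,C)$ fills in one more of $x$'s item slots exactly when $S_C\cap V_x\neq\emptyset$, and that $S_C\cap V_x$ is precisely the set of currently-unfilled item slots of $x$ lying in the user set $U_C$ (steps 4 and 6 never write an unfilled item slot). Hence, writing $w\le v$ for the number of unfilled item slots of $x$, the fill-probability is $p_w=1-\falling{(s-u)}{w}/\falling{s}{w}$ as in \eqref{eqn:pw}: this is \cref{lem:intersect} with intersection size $0$, using that $U_C$ is a uniformly random $u$-subset of the $s$ positions. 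Since each factor of $\falling{(s-u)}{w}/\falling{s}{w}=\prod_{i=0}^{w-1}(1-\tfrac{u}{s-i})$ lies in $[0,1]$, this ratio is non-increasing in $w$, so $p_w\le p_v$, and it suffices to prove
\[
\prod_{i=0}^{v-1}\Bigl(1-\frac{u}{s-i}\Bigr)\;\ge\;1-\Cphigh\;=\;0.025124 .
\]

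For the estimate, put $z_i=u/(s-i)$. The hypotheses $v\ge\CvLower$ and $u\le\CmaxuvOvers\,s/v$ give $u\le(\CmaxuvOvers/\CvLower)\,s<0.01\,s$ and $uv\le\CmaxuvOvers\,s$; with $v\le\CmaxvOvers\,s$ this yields $z_i\le z_{v-1}=u/(s-v+1)<0.01$ for all $i$. Using the integral bound $\sum_{i=0}^{v-1}\tfrac{1}{s-i}\le\ln\tfrac{s}{s-v}$ and the fact that $z\mapsto-\ln(1-z)/z$ is increasing on $(0,1)$,
\[
\sum_{i=0}^{v-1}z_i\;\le\;u\ln\frac{s}{s-v}\;\le\;\CmaxuvOvers\cdot\frac{-\ln(1-v/s)}{v/s}\;<\;1.002\,\CmaxuvOvers\;<\;3.66 .
\]
For the quadratic part, $\sum_{i=0}^{v-1}\tfrac{1}{(s-i)^2}\le\tfrac{1}{s-v}-\tfrac{1}{s}$ by telescoping and $u^2v\le\CmaxuvOvers^2s^2/v$, so
\[
\sum_{i=0}^{v-1}z_i^2\;\le\;\frac{u^2v}{s(s-v)}\;\le\;\frac{\CmaxuvOvers^2}{v(1-v/s)}\;\le\;\frac{\CmaxuvOvers^2}{\CvLower\,(1-\CmaxvOvers)}\;<\;0.037 .
\]
Finally, applying $\ln(1-z)\ge-z-\tfrac{z^2}{2(1-z)}$ (valid on $[0,1)$, since $-\ln(1-z)=z+\sum_{n\ge2}z^n/n\le z+\tfrac{1}{2}\sum_{n\ge2}z^n$) to each factor with $z_i<0.01$,
\[
\ln\prod_{i=0}^{v-1}(1-z_i)\;\ge\;-\sum_i z_i-\frac{1}{2(1-0.01)}\sum_i z_i^2\;\ge\;-3.66-0.019\;>\;\ln(0.025124),
\]
which gives $p_v=1-\prod_i(1-z_i)\le\Cphigh$ and, with the reduction above, proves the lemma.

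The main obstacle is the tightness of this numerical estimate: the constants in the hypotheses are calibrated so that the inequality barely holds --- the slack in the final probability is only about $10^{-3}$ --- so coarse bounds fail. For instance, using $\ln(1-z)\ge-z/(1-z)$ in place of the second-order bound, or estimating the harmonic sum $\sum\tfrac{1}{s-i}$ by $v/(s-v)$ instead of $\ln\tfrac{s}{s-v}$, already overshoots the stated constant. The correct strategy is to treat $\ln(1-z_i)\approx-z_i$ as the dominant behavior (legitimate because each $z_i$ is of order $u/s\approx10^{-2}$), control the linear term via $uv\le\CmaxuvOvers\,s$ together with the integral bound on the harmonic sum, and make the quadratic correction negligible using $v\ge\CvLower$.
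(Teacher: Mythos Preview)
Your argument is correct. The reduction to bounding $p_v$ via the monotonicity of $\falling{(s-u)}{w}/\falling{s}{w}$ in $w$ matches the paper exactly, but your estimate of the product $\prod_{i=0}^{v-1}(1-u/(s-i))$ differs from the paper's. The paper replaces every factor by the smallest one, obtaining
\[
\frac{\falling{(s-u)}{v}}{\falling{s}{v}} \;\ge\; \left(\frac{s-u-v+1}{s-v+1}\right)^{v} \;>\; \left(1-\frac{u}{s-v}\right)^{v},
\]
then plugs in $u/(s-v)\le \CzeroUpper/v$ and invokes the one-line exponential bound of \cref{lem:expapprox} with $\alpha=\CboxUpper$. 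You instead take logarithms, control the linear term $\sum z_i$ via the integral bound on the harmonic sum and the constraint $uv\le \CmaxuvOvers\,s$, and absorb the discrepancy into a quadratic correction $\sum z_i^2$ made small by $v\ge \CvLower$. Both routes arrive at the same numerical threshold with comparable slack (a few parts in $10^{3}$). The paper's version is shorter because the uniform-smallest-factor bound loses almost nothing here (all the $z_i$ are within a factor $1/(1-v/s)\le 1.004$ of each other), so the extra precision your integral bound buys on the linear term is not actually needed; conversely, your approach would continue to work in regimes where the factors vary more, at the cost of the additional second-order bookkeeping.
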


Now we use the probability upper bound to prove an upper bound on the tipping point $\tau$.

\begin{restatable}{lemma}{lemmauppertau}\label{lem:uppertau}
    Let $s,u,v$ be CCBF parameters that satisfy the conditions of \cref{lem:upperp},
    and suppose $m,t$ are integers such that
    $s \ge \CminsOvern{} m$ and
    $v \le \CmaxvOvert{} t$. Then the tipping point $\tau$, for threshold $t$
    and with $m$ total set bits in the table $T$, is at most
    $\Cfour{} t$.
\end{restatable}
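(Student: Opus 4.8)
The plan is to convert the exact identity \eqref{eqn:r}, $\tau = v - \sum_{w=0}^v q_w R_{w,t}$, into an upper bound on $\tau$ by lower-bounding the sum $\sum_{w=0}^v q_w R_{w,t}$. Two ingredients feed this. First, I would observe that although \cref{lem:upperp} is phrased as a bound on ``the probability that a call to $\Increment(x,C)$ fills in one more slot,'' in a state with $w$ unfilled item slots this probability is exactly $p_w$ of \eqref{eqn:pw}, and $p_w$ is nondecreasing in $w$ (each additional factor in $\falling{(s-u)}{w}/\falling{s}{w}$ lies in $[0,1]$); hence under the hypotheses inherited from \cref{lem:upperp} we get the \emph{uniform} bound $p_w \le \bar p := \Cphigh$ for every $0\le w\le v$. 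Second, I would prove by induction that $R_{w,k} \ge w - \bar p\,k$ for all $w,k\ge 0$: the cases $k=0$ (where $R_{w,0}=w$) and $w=0$ (where $R_{0,k}=0 \ge -\bar p k$) are immediate, and for $w,k\ge 1$ plugging the inductive hypothesis for $R_{w-1,k-1}$ and $R_{w,k-1}$ into the recurrence $R_{w,k}=p_w R_{w-1,k-1}+(1-p_w)R_{w,k-1}$ yields $R_{w,k} \ge w - \bar p(k-1) - p_w \ge w - \bar p k$, using $p_w\le\bar p$ at the last step.

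Next I would identify the mean of the distribution $\{q_w\}_{w=0}^v$. By construction $q_w$ is the probability that exactly $w$ of an item's $v$ slots fall outside the uniformly random size-$m$ set of positions of $T$ already set by other items; so writing $W$ for this (random) number of unfilled slots before the $t$ increments on $x$, we have $\sum_w q_w = 1$ and, by linearity of expectation over the $v$ slots, $\Expect[W] = \sum_w w\,q_w = v\,\tfrac{s-m}{s}$. Combining the two ingredients,
\[
\sum_{w=0}^v q_w R_{w,t} \;\ge\; \sum_{w=0}^v q_w\,(w-\bar p\,t) \;=\; \Expect[W] - \bar p\,t \;=\; v\Bigl(1-\tfrac{m}{s}\Bigr) - \bar p\,t,
\]
and therefore, from \eqref{eqn:r}, $\tau \le v - v\bigl(1-\tfrac{m}{s}\bigr) + \bar p\,t = \tfrac{vm}{s} + \bar p\,t$.

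It then remains to substitute the hypotheses: $s \ge \CminsOvern\,m$ gives $\tfrac{vm}{s} \le \tfrac{v}{\CminsOvern}$, and $v \le \CmaxvOvert\,t$ gives $\tau \le \bigl(\tfrac{\CmaxvOvert}{\CminsOvern} + \bar p\bigr)t$, which one checks is at most $\Cfour\,t$. The main (and essentially only nontrivial) obstacle is the inductive lower bound on $R_{w,t}$, and in particular the realization that \cref{lem:upperp} must be used as the uniform bound $p_w\le\bar p$ over all $w\le v$ via monotonicity rather than at a single state; the rest is linearity of expectation and arithmetic. I would take care with the closing numerical comparison — the gap between $\CmaxvOvert/\CminsOvern + \Cphigh$ and $\Cfour$ is small — and be precise about whether the bound is claimed for the analytic value of $\tau$ or for its rounding to the nearest integer.
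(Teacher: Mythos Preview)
Your proof is correct and follows essentially the same approach as the paper: both decompose $\tau$ into the contribution from the $t$ increments on $x$ (bounded by $\Cphigh\,t$ via \cref{lem:upperp}) and the contribution from the $m$ other increments (bounded by $mv/s$), then sum and substitute the hypotheses. The only difference is that you formalize the first contribution through the recurrence for $R_{w,t}$ and the induction $R_{w,k}\ge w-\bar p\,k$, whereas the paper argues it directly by linearity of expectation; your route is more careful but arrives at the identical numerical bound.
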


We can now state our main theorems on the accuracy of the CCBF data structure. Consider a call to the predicate function $\TestCount(x,t)$, which attempts to determine whether the number of prior $\Increment{}$ calls with the same item $x$ is at least $t$. Our exact computation of the tipping point $r(t)$ shows that this function always returns the correct answer with at least 50\% probability. But of course, so would a random coin flip!

Let $k$ be the \emph{actual} number of calls to $\Increment(x,C)$ that have occurred. Then two kinds of errors can occur: a \emph{false positive} if $\TestCount(x,t)$ returns true but $k<t$, and a \emph{false negative} if $\TestCount(x,t)$ returns false when $k\ge t$. Intuitively, both errors occur with higher likelihood when the true count $k$ is close to $t$. Our main theorem captures and quantifies this intuition, saying that, ignoring low-order terms, $\TestCount$ is accurate to within a 10\% margin of error with high probability.

\begin{restatable}{theorem}{theoremfp}\label{thm:fp}
    Let $n$ be an upper bound on the total number of calls to $\Increment$, and $t$ be a desired threshold for $\TestCount$. Suppose the parameters $s,u,v$ for a CCBF data structure satisfy the conditions of \cref{lem:lowerp}, and furthermore that
    $v \le \CmaxvOvertb{} t$.
    If the actual number of calls to $\Increment(x,C)$ is at most
    $t - \Cminfp\sqrt{\lambda t}$, then the probability $\TestCount(x,t)$ gives
    a false positive is at most $2^{-\lambda}$.
\end{restatable}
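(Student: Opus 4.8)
The plan is to track, for the fixed item $x$, the random variable $F_j$ defined as the number of $1$-bits lying inside $x$'s item set $V_x$ after the first $j$ calls to $\Increment(x,\cdot)$ have been processed (on top of the $m$ unrelated increments that set the other bits of $T$). Two observations drive everything. First, by the very construction of the tipping point, $\tau = \Expect[F_t]$: indeed $\sum_w q_w R_{w,t}$ is precisely $\Expect[\,v - F_t\,]$, so $\tau = v - \Expect[v-F_t] = \Expect[F_t]$. Second, since the hypothesis gives $k \le t - \Cminfp\sqrt{\lambda t} < t$, a false positive is exactly the event $\{F_k \ge \tau\}$, and because $F_j$ is non-decreasing in $j$ this is the event that $F_k$ overshoots its own mean by at least $\Delta := \Expect[F_t]-\Expect[F_k] = \Expect[F_t-F_k] \ge 0$. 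So it suffices to (i) lower-bound $\Delta$ and (ii) establish an upper-tail concentration bound for $F_k$ around $\Expect[F_k]$.

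For (i), I would write $F_t - F_k$ as the number of $x$-slots that become newly filled during increments $k+1,\dots,t$. Whenever at most $\tau$ of $x$'s slots are currently filled, \cref{lem:lowerp} says each such increment fills a fresh slot with probability at least $\Cplow$. The hypotheses of \cref{lem:lowerp} force $v \ge \CminvOverk\,\tau$, and combined with the extra assumption $v \le \CmaxvOvertb\,t$ this yields $\tau \le \tfrac{\CmaxvOvertb}{\CminvOverk}\,t$, i.e.\ $\tau$ is only a small constant factor above $t$; this is exactly what keeps the $t-k$ extra increments inside the ``at most $\tau$ filled'' regime to which \cref{lem:lowerp} applies, up to a low-order correction. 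I would make this rigorous by truncating the count at the first time it reaches $\lceil\tau\rceil$ filled slots and coupling that truncated process from below with i.i.d.\ $\mathrm{Bernoulli}(\Cplow)$ trials, which shows $F_t-F_k$ stochastically dominates a correspondingly scaled sum and hence $\Delta \ge \Cplow(t-k) - o(\sqrt{t})$. Since $\Cplow\cdot\Cminfp > 2$, plugging in $t-k \ge \Cminfp\sqrt{\lambda t}$ makes $\Delta$ comfortably exceed $2\sqrt{\lambda t}$.

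For (ii), I would note that $F_k$ is a deterministic function of independent ingredients — the random $v$-subset $V_x$, the random user sets of the (at most $k$) complainers, the internal coin flips inside each $\Increment$ call, and the targets of the $m$ unrelated increments — and that perturbing any single one of these ingredients changes $F_k$ by at most $1$: flipping one increment's behavior moves at most one extra bit into or out of $V_x$, and a short coupling argument shows there is no cascading effect. A bounded-differences (Hoeffding/McDiarmid) inequality, or equivalently a Doob-martingale argument that reveals the ingredients in stages (first $V_x$ and which of its slots the $m$ unrelated increments pre-fill, then the $k$ complainers one at a time), then bounds $\Pr[F_k - \Expect[F_k] \ge \Delta]$ by $2^{-\lambda}$ once $\Delta$ is at least the relevant constant multiple of $\sqrt{\lambda t}$ — the constant $\Cminfp$ being picked precisely so the exponent reaches $\lambda$. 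Combining the bound on $\Delta$ from (i) with the tail bound from (ii) gives $\Pr[\text{false positive}] = \Pr[F_k \ge \Expect[F_t]] \le 2^{-\lambda}$.

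I expect the real obstacle to be the ``stays in regime'' step inside (i): a priori the running count can momentarily exceed $\tau$ during the final few of the $t-k$ increments, so \cref{lem:lowerp} cannot be invoked verbatim at every step, and one must argue — via the truncated/coupled process and the fact that excursions above $\tau$ are themselves rare and short — that this costs only a $\Theta(\sqrt t)$ term, negligible against $\Delta = \Theta(\sqrt{\lambda t})$. A secondary delicate point is rigorously justifying the unit bounded-difference constant in (ii) in the presence of the sequential dependence between increments (each increment's allowed write-locations depend on the current table state), which is where the coupling argument for ``no cascading'' does the work.
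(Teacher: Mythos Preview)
Your overall decomposition --- lower-bound $\Delta = \Expect[F_t]-\Expect[F_k]$ via \cref{lem:lowerp}, then get an upper-tail bound on $F_k$ around $\Expect[F_k]=\tau_k$ --- is exactly the paper's. The two arguments diverge only in how each step is executed.

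For step (i), the paper is actually \emph{less} careful than you: it simply asserts $\tau_t - \tau_k \ge \Cplow(t-k)$ by invoking \cref{lem:lowerp} for each of the $t-k$ extra increments, without addressing the ``stays in regime'' issue you flag. So your truncation/coupling discussion is an addition to, not a divergence from, the paper.

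For step (ii), however, the paper takes a much shorter route that sidesteps all of your cascading worries. Rather than a Doob martingale over the independent inputs, it applies Hoeffding directly to the $v$ slot-indicator variables $X_1,\ldots,X_v$ (where $X_i$ records whether the $i$th element of $V_x$ is set in $T$), observing that these are \emph{negatively correlated}: filling one slot can only decrease the chance another gets filled. This immediately gives $\Pr[\sum X_i \ge \tau_k + \Delta] \le \exp(-2\Delta^2/v)$, and then $\Delta \ge \Cplow\cdot\Cminfp\sqrt{\lambda t} \ge \sqrt{\lambda v/2}$ (using $v \le \CmaxvOvertb{}t$) yields $\exp(-\lambda)\le 2^{-\lambda}$.

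Your McDiarmid plan is harder to push through cleanly. If the independent ingredients include the $m\le n$ unrelated increments one by one, the denominator in the exponent becomes $\Theta(m)$ rather than $\Theta(v)$ and the bound is far too weak. Conditioning the pre-fill away first repairs the denominator, but then you must also control the fluctuation of the \emph{conditional} mean, and your unit-bounded-difference claim for perturbing a single element of $V_x$ or a single user set $U_C$ genuinely needs the ``no cascading'' coupling you allude to; this is not obvious, since changing one bit of $T$ alters the settable sets $S_C$ seen by all later complainers. The paper's negative-correlation shortcut avoids all of this at the cost of a one-line appeal to a standard extension of Hoeffding.
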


\begin{restatable}{theorem}{theoremfn}\label{thm:fn}
    Let $n$ be an upper bound on the total number of calls to $\Increment$, and $t$ be a desired threshold for $\TestCount$. Suppose the parameters $s,u,v$ for a CCBF data structure satisfy the conditions 
    of \cref{lem:lowerp,lem:uppertau}.
    If the actual number of calls to $\Increment(x,C)$ is at least
    \begin{equation} \label{eqn:fn}
        1.1 t + .4 \lambda + .7\sqrt{\lambda t},
    \end{equation}
    then the probability $\TestCount(x,t)$ gives
    a false negative is at most $2^{-\lambda}$.
\end{restatable}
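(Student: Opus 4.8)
The plan is to bound the false-negative probability by a binomial tail and then apply a Chernoff estimate sharp enough to exploit the structure of the parameters. Since the number $F$ of filled slots in $x$'s item set only ever increases as more $\Increment(x,\cdot)$ calls occur, the false-negative probability is non-increasing in the true count $k$; so it suffices to take $k$ equal to the stated bound $1.1t+.4\lambda+.7\sqrt{\lambda t}$. A false negative occurs exactly when $F$ is below the (rounded) tipping point $\tau$ at the time of $\TestCount$. As long as strictly fewer than $\tau$ slots of $x$ are filled, \cref{lem:lowerp} (whose hypotheses are assumed) guarantees that each $\Increment(x,C)$ fills a fresh slot with probability at least $p := \Cplow$; coupling the real execution with $k$ i.i.d.\ $\mathrm{Bernoulli}(p)$ variables $Z_1,\dots,Z_k$ so that $Z_i=1$ forces a fresh fill whenever $F<\tau$, one gets $F \ge \min(\tau,\sum_i Z_i)$, and hence
\[
  \Pr[\text{false negative}] \;\le\; \Pr\big[\mathrm{Binomial}(k,p) < \tau\big].
\]
Collisions caused by increments on unrelated items only fill $x$'s slots faster, so discarding them in this bound is sound.

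Next I would pin the numbers down. By \cref{lem:uppertau} (its hypotheses are also assumed, with $s\ge\CminsOvern n\ge\CminsOvern m$) we have $\tau \le \Cfour\, t$. The crucial and very tight arithmetic is that the slack has been calibrated so that $1.1\,p = 1.1\cdot\Cplow \ge \Cfour$; hence the gap $d := pk - \tau$ between the binomial's mean and the threshold satisfies $d \ge .38\lambda + .66\sqrt{\lambda t}$ (the linear-in-$t$ contributions essentially cancel). I then pass to the failure count: $\Pr[\mathrm{Binomial}(k,p) < \tau] = \Pr[\mathrm{Binomial}(k,q) > k-\tau]$ with $q := 1-p < .044$, mean $\mu' := qk$, and $k-\tau = \mu'+d$. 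The point of working with failures is that $\mu'$ is small — of order $.05t$ rather than $pk\approx t$ — so the true scale of fluctuations is far below the crude $pk$, and this is exactly what makes the small coefficients $.4$ and $.7$ adequate.

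Finally I would apply a sharp upper-tail bound to $\mathrm{Binomial}(k,q)$, splitting on the size of $d$ relative to $\mu'$. When $d\lesssim\mu'$, a Bernstein-type bound $\exp\!\big(-d^2/(2\mu'+\tfrac23 d)\big)$ is comfortable: here $d^2/\mu'\approx .44\lambda t/.05t \approx 9\lambda$, far above $\lambda\ln 2$ even after the constant factors. When $d\gg\mu'$ — the $\lambda$-dominated regime, where $d\approx.38\lambda$ and $\mu'\approx.017\lambda$ — I expect Bernstein/Bennett-type bounds to be too weak by a constant, and one must instead use the exact form $\Pr[\mathrm{Binomial}(k,q)\ge c] \le e^{-\mu'}(e\mu'/c)^{c}$ with $c=k-\tau$, which behaves like $2^{-\Theta(d)}$ and comes out to roughly $2^{-1.2\lambda}$. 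In each regime the conclusion is to check that the exponent is at least $\lambda\ln 2$, so the tail is at most $2^{-\lambda}$.

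The main obstacle I anticipate is exactly this constant bookkeeping at the seams. The multiplicative slack $1.1$ together with $p\approx.956$ exceeds the tipping-point bound $\tau\le\Cfour t$ only in the seventh decimal place, so essentially the entire probability margin has to be extracted from the $.4\lambda$ and $.7\sqrt{\lambda t}$ terms; moreover the transition between ``$d$ below $\mu'$'' and ``$d$ above $\mu'$'' is tight enough that the Poisson-tail form is genuinely needed in the latter, while the former needs the small-$q$ (small failure probability) structure to beat the naive $e^{-d^2/(2pk)}$. Verifying that, across all ratios of $\lambda$ to $t$ and at the worst cases (around $\lambda\approx t$ and $\lambda\gg t$, where the margin shrinks to only about $1.2$--$1.3$), one of these bounds clears $2^{-\lambda}$ is the real technical content.
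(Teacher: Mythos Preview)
Your proposal is correct, and your coupling step is actually more careful than the paper, which simply asserts $\Expect[\text{filled slots}]\ge \Cplow\,k$ and applies its Hoeffding lemma (stated for independent Poisson trials) without addressing the dependence between successive $\Increment$ outcomes; your stochastic-domination argument $F\ge\min(\tau,\mathrm{Binomial}(k,p))$ is exactly what makes that step rigorous.

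Where you diverge from the paper is in the tail bound itself, and here you are working harder than necessary. The paper does not pass to the failure count or split into regimes: it applies the \emph{additive} Hoeffding inequality $\Pr[\mathrm{Binomial}(k,p)\le pk-d]\le\exp(-2d^2/k)$ directly, computes
\[
\frac{2d^2}{k}\;\ge\;\frac{2(0.38\lambda+0.66\sqrt{\lambda t})^2}{1.1t+0.7\sqrt{\lambda t}+0.4\lambda}\;\ge\;0.7\lambda\;>\;\lambda\ln 2,
\]
and is done in one line. You correctly noted that the ``naive'' multiplicative form $e^{-d^2/(2pk)}$ fails (since $pk\approx k$ when $p\approx 0.956$), but you overlooked that the additive Hoeffding bound has $k$, not $2pk$, in the denominator---a factor-of-two improvement that is uniformly enough here. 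Your Bernstein/Poisson case split exploits the small $q=1-p$ to get even sharper exponents (your $\approx 2^{-1.2\lambda}$ in the $\lambda\gg t$ regime versus the paper's $\approx 2^{-1.04\lambda}$), but that extra sharpness is not needed to reach $2^{-\lambda}$, so the paper's route is substantially shorter for the same conclusion.
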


We can easily summarize the various conditions on the parameters as follows:
\begin{corollary}\label{cor:params}
    Let $n$ be a limit on the total number of calls to \Increment{}, and
    $t$ be a desired threshold satisfying 
    $50 \le t \le \tfrac{n}{20}$.
    Then by setting the parameters of a CCBF according to
    $s=96n$, $v=7.409t$, and $u=47.31\tfrac{n}{t}$, any call to
    $\TestCount(x,t)$ will satisfy the high accuracy assurances of
    \cref{thm:fp,thm:fn}.
\end{corollary}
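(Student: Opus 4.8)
The plan is to prove \cref{cor:params} by pure substitution: plug the explicit settings $s=96n$, $v=7.409t$ (rounded to an integer), $u=47.31\,n/t$ into every numerical hypothesis of \cref{lem:lowerp,lem:upperp,lem:uppertau} and into the side conditions of \cref{thm:fp,thm:fn}, and verify each inequality holds for all admissible configurations. The only external facts I need are: (i) at every point in time the number of set bits satisfies $m\le n$, since each \Increment{} flips at most one bit; and (ii) the stated range $50\le t\le n/20$.

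I would organize the checks in dependency order. \emph{Step 1, \cref{lem:upperp}:} the lower bound $v\ge \CvLower$ comes from $t\ge 50$; the upper bound $v\le \CmaxvOvers\,s$ is exactly where the hypothesis $t\le n/20$ is spent, since one compares $7.409\,t\le 7.409\,n/20$ against $\CmaxvOvers\cdot 96 n$; and $u\le \CmaxuvOvers\,s/v$ collapses, after substituting $s/v=(96/7.409)(n/t)$, to a single arithmetic inequality between constants — this is how the value $u=47.31\,n/t$ is reverse-engineered. \emph{Step 2, \cref{lem:uppertau}:} since $m\le n$ gives $s=96n\ge 96m$, and $v=7.409t$ meets $v\le \CmaxvOvert\,t$ trivially, and \cref{lem:upperp} is in force by Step 1, we obtain the uniform bound $\tau\le \Cfour\,t$ on the tipping point for every $m\le n$. \emph{Step 3, \cref{lem:lowerp}:} using $\tau\le \Cfour t$ as the operative value of ``$\tau$'', the condition $v\ge \CminvOverk\,\tau$ reduces to checking $\CminvOverk\cdot\Cfour\le 7.409$, and the condition $u\ge \CminukOvers\,s/\tau$ reduces to checking $\CminukOvers\cdot 96/\Cfour\le 47.31$, both closed-form numeric facts. \emph{Step 4, side conditions:} \cref{thm:fp} additionally requires $v\le \CmaxvOvertb\,t$, immediate from $7.409\le 8$; \cref{thm:fn} requires nothing beyond \cref{lem:lowerp,lem:uppertau}, already established. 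This completes the argument.

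The main obstacle is not conceptual but one of numerical precision: essentially all of these inequalities are nearly tight — several, such as $v\le\CmaxvOvers\,s$ and $v\ge\CminvOverk\,\tau$, have relative slack well under one percent — so the real care is in (a) propagating the constants with enough decimal digits that each comparison is genuinely valid, and (b) controlling the interaction with rounding $v$ (and $u$) to integers at the boundary cases $t=50$ and $t=n/20$, where an unlucky rounding direction could in principle break a borderline bound. A clean write-up should therefore fix a rounding convention for $v$ and $u$ up front and then simply tabulate the handful of constant inequalities listed above, each verified to the required precision.
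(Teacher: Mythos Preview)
Your approach is exactly the one the paper intends: the corollary is stated with no proof, as a direct summary of the parameter constraints, so the only content is the substitution-and-verify argument you outline, organized in the same dependency order (first \cref{lem:upperp}, then \cref{lem:uppertau}, then \cref{lem:lowerp}, then the extra side conditions of \cref{thm:fp,thm:fn}). Your identification of the main obstacle---that several of the constant inequalities are tuned to within a fraction of a percent and that the rounding of $v$ and $u$ at the boundary cases must be handled explicitly---is exactly right and is the only real work in writing this out.
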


%\todo{something about set of malicious users to strengthen first theorem}
\section{Instantiating FACTS}\label{sec:construction}
We are now ready to present our construction of FACTS.  This construction is based on the collaborative counting Bloom filter (CCBF) data structure presented in Section~\ref{sec:ccbf} to obliviously count the number of complaints on each message.  It uses an underlying EEMS for sending end-to-end encrypted messages between users.

\paragraph{Setup:}
The setup procedure for FACTS first sets up the underlying end-to-end encrypted messaging system (EEMS).  For simplicity, we assume that there is a fixed number $c$ of users using the system.  Setup  generates all necessary keys for the server $S$ and all $c$ users and distributes the keys.  We note that if the messaging system is already setup, FACTS can simply leverage this for communication.  Additionally, the server initializes an empty CCBF data structure.

\paragraph{Sending and receiving messages:}
We now describe how FACTS originates, forwards, and verifies messages.
We start our description with an auxiliary protocol $\orig(A,x)$ between a user $A$ and the server $S$ to originate a new message $x$.  This protocol is used to create an origination tag $\tag_x$ containing information about the message and originator.  This tag binds the originator's identity $A$  to the message $x$ to enable recovery upon an audit, while keeping $A$ private from receiving users, and keeping the message $x$ private from the server $S$.

Roughly, this protocol works by having $S$ produce a signature on (a hash of) the message together with the originator's identity.  Due to the use of the hash, $S$ produces this signature without learning anything about the message, while the fact that $S$ includes the originator's identity in this signature prevents a malicious originator from including the wrong identity in the message. Moreover, since the tag is bound to the message, this prevents a replay attack where an adversary reuses tags across messages to change the identity of the originator. 

\begin{algorithm}[htbp]
\begin{enumerate}
  \item To originate a message $x$, the originator $A$ chooses a random salt $r \gets \bool^\secparam$, computes a salted hash $h = H(r || x)$,
    and sends $h$ to $S$.
  \item $S$ computes an encryption of the sender's identity, $e \gets \Enc_{PK_S}(A)$, and produces signature $\s = \Sig_{SK_S}(h || e)$.  $S$ sends the tuple $(e, \s)$ to $A$.
  \item $A$ outputs $\tagg_m = (r,e,\s)$.
\end{enumerate}
\caption{$\orig(A,x)$} 
\end{algorithm}

Next, we describe the $\send$ protocol which makes use of the $\orig$ protocol to send a message $x$ between clients $A$ and $B$ while preserving (encrypted) information about the originator of $x$.  $x$ can either be a newly originated message or a forward of a previously received message.  In either case, $\send$ runs the $\orig$ protocol to produce a new tag $\tag'_x$ on the message $x$.  In the case of a new message, $\tag'_x$ is sent along with the message, while in the case of a forward, it is discarded and the message is forwarded along with its original tag instead.

\begin{algorithm}[htbp]
\begin{enumerate}
    \item If $\tagg_x=\perp$, then $x$ is a new message $A$ wants to originate.  
    $A$ runs $\tagg_x \gets \orig(A,x)$.
    \item If $\tagg_x\neq \perp$ $x$ is a message that $A$ wants to forward.  $A$ runs $\tagg'_x \gets \orig(A,x)$ and discards the output.
    \item $A$ sends $(\tagg_x,x)$ to $B$ using the E2E messaging platform's $\sendee$ protocol.
\end{enumerate}
\caption{$\send(A,B,\tagg_x,x)$}
\end{algorithm}

$\receive$ is a non-interactive algorithm that 
allows a receiving user to verify the tag, $\tag_x$, affiliated with a message $x$.  Specifically, the receiver $B$ verifies the server's signature included in $\tag_x$ to make sure that the tag indeed corresponds to $x$ and that the originator id has not been modified.  Importantly, $B$ can perform this verification without learning the identity of the originator since the tag contains an encryption of this identity (this ciphertext is what is verified by $B$).

\begin{algorithm}
\begin{enumerate}
    \item Parse $\tagg_x$ as $\tagg_x=(r,e,\s)$
    \item Compute $h=H(r||x)$
    \item Run $\Ver_{PK_S}(\s,(h||e))$ to check that $\s$ is a valid signature by the server on $(h||e)$.  If not, then discard the received message.
\end{enumerate}
\caption{$\receive(A,B,\tagg_x,x)$}
\end{algorithm}

\paragraph{Complaints and Audit:}
We now describe how FACTS allows users to complain about received messages and to trigger an audit once enough complaints are registered on a message.  For these methods we make extensive use of a CCBF data structure for (approximately) counting complaints and detecting when a threshold of complaints has been reached.

The $\complain$ protocol is used by a receiving user to issue a complaint on a received message $(\tag_x, x)$.  We assume that prior to issuing a complaint the user verifies that $\tag_x$ is valid using the $\receive$ protocol, and thus will only consider the case of valid tags.  To issue a complaint on $(\tag_x,x)$, the user $C$ calls CCBF.$\Increment(\tag_x,C)$.  As described in Section~\ref{sec:ccbf}, this runs a protocol with the server in which the user (eventually) sends the location of a bit to flip to 1 to increment the CCBF count for the message $x$.  To prevent malicious adversaries from flooding FACTS with complaints, we enforce a limit of $L$ complaints per user per epoch.  Note that since the server knows the identities of complaining users, he can easily enforce this restriction. 

Two important observations are in order here.  First, we use $\tag_x$ rather than the message $x$ as the item to increment in the CCBF.  The reason for this is that the tag is unpredictable to an adversary who has not received the message $x$ through FACTS (even if $\A$ knows $x$).  Second, we note that the CCBF.$\Increment$ procedure is inherently sequential.  It requires that the CCBF table $T$ be locked for the duration of the $\Increment$ call to prevent race condition and to maintain obliviousness (see Section~\ref{sec:ccbf} for discussion). This means that only one user can run this procedure at a time.  Thus, we focus on making this procedure as cheap as possible to minimize the impact of this bottleneck.  In case multiple clients call $\complain$ at overlapping times, the server can queue these complaints and process them one at a time.

% To do so, the client first retrieves his user set bits from the server's database.  He then checks whether any of the bits in the message set (derived from $\tag_m$) are set to 0.  If so, he sets a random one of these bits to 1.  Otherwise, he chooses a random bit that he can write to and sets this to one. \anote{Probably need to talk about synchronization in this step.  Also, match notation from Bloom filter section.}

\begin{algorithm}
\begin{enumerate}
    \item Parse $\tagg_x$ as $\tagg_x=(r,e,\s)$
    \item Call CCBF.$\Increment(C,\tagg_x)$
    % \item $u_c$ downloads his bits from the DB $DB_{u_c}$
    % \item Compute locations for message ID $s$ by computing ${i_1,\ldots,i_n}=H(s)$
    % \item Calculate the set $INT = H(s) \cap u_c$ and see if any of the bits in $DB_{INT}$ are 0
    % \begin{itemize}
    %     \item if so, choose a random such location $l$ from $DB_{INT}$ and send it to $S$
    %     \item Otherwise, choose a random location equal to 0 in $DB_{u_c}$ and send that to $S$
    % \end{itemize}
    % \item Upon receiving location $l$, $S$ sets $DB_l = 1$.
\end{enumerate}
\caption{$\complain(C, \tagg_x,x)$}
\end{algorithm}

The $\audit$ protocol checks whether a threshold of complaints has been reached for a given message $x$ and, if so, triggers an audit of this message.  This protocol works by using the CCBF.$\TestCount$ protocol to check whether the threshold $t$ of complaints has been reached on this message.  If this returns True, then the user simply sends $(\tag_x,x)$ to the server who first checks the validity of the tag, and then if it's valid, decrypts the corresponding part of the tag to recover the identity of the message originator.

An important observation is that the CCBF.$\TestCount$ operation is read-only and thus does not need to block.  Thus, unlike the $\complain$ command, many clients can execute the $\audit$ command in parallel.

\begin{algorithm}
\begin{enumerate}
    \item Parse $\tagg_x$ as $\tagg_x=(r,e,\s)$
    \item Call CCBF.$\TestCount(\tagg_x,t)$.
    \item If $\TestCount$ returns True, $x$ sends $(\tagg_x,x)$ to $S$
    \item $S$ verifies that the tag is valid by checking the $\s$ is a valid signature on $h||e$ where $h=H(r||x)$.  
    \item If so, $S$ recovers the identity ($A$) of the originator by computing $A=\Dec_{SK_S}(e)$.
\end{enumerate}
\caption{$\audit(C, \tagg_x,x)$}
\end{algorithm}

We note that $\audit$ allows the server to learn the message $x$ and the originator $A$.  We do not specify what the server does upon learning this information, as that is specific to a particular use of FACTS.  One possible option is for the server to review $x$ to see if it is truly a malicious message, and if so, block the user $A$ from sending further messages.  However, this decision is orthogonal to the FACTS scheme and we do not prescribe a particular action here.

\section{Experimental  Evaluations}\label{sec:experiments}
In this section, we empirically evaluate the accuracy and performance of FACTS.  We perform two sets of experiments.  The first, measures the error in terms of number of complaints above or below the threshold as a function of the total number of complaints.  The second, measures the performance overhead for messaging and complaint as a function of the threshold.

\subsection{Experimental parameters}
For our experiments, we set the maximum number of complaints per epoch $n=1,000,000$.  If we consider an epoch of one day, this results in approximately 11.6 complaints per second. To understand accuracy and efficiency of FACTS, we measure them for a range of thresholds $100 \le t \le 1000$.  With these fixed, we set the remaining parameters according to Corollary~\ref{cor:params}.  In particular, we set the server's storage $s=96n=12MB$.  The user set size $u$ varies from (approximately) 47,000 to 470,000 bits, while the message set size $v$ goes from (approximately) 740 to 7400.

\subsection{Accuracy and stability} 
To measure the accuracy of FACTS, we observe the actual number of complaints necessary to cause an audit on a single message as a function of the background noise (i.e., total complaints on other messages).  We calculate both the mean and the standard deviation of this value to capture the accuracy and stability of the complaint mechanism.  To get a statistically meaningful estimate of these, our experiments run 1000 iterations of each parameter configuration.

The results of our experiments are presented in Figure~\ref{fig:Accuracy}.  The left side of this figure shows the mean number of complaints to trigger an audit for a given threshold $t$.  As can be seen from the error bars, the absolute errors in number of complaints is quite small, with a maximum deviation of about 10 complaints at a threshold of 1000.  Not surprisingly, we see that this error increases as the background noise increases, but the mean number of complaints remains remarkably steady at the desired value.  The right side of Figure~\ref{fig:Accuracy} shows the relative standard deviation of the number of complaints as a function of background noise.  From this graph we can see that the relative error is only a few percent, with a maximum relative error of about 3.5\%.  Not surprisingly, the threshold 100 measurement incurs the highest relative error because the noise is a much higher ratio when compared to the threshold.  These experiments suggest that FACTS achieves good accuracy for a wide variety of threshold and background noise.

\begin{figure*}
\hspace*{-0.7in}
\includegraphics[width=1.2\textwidth]{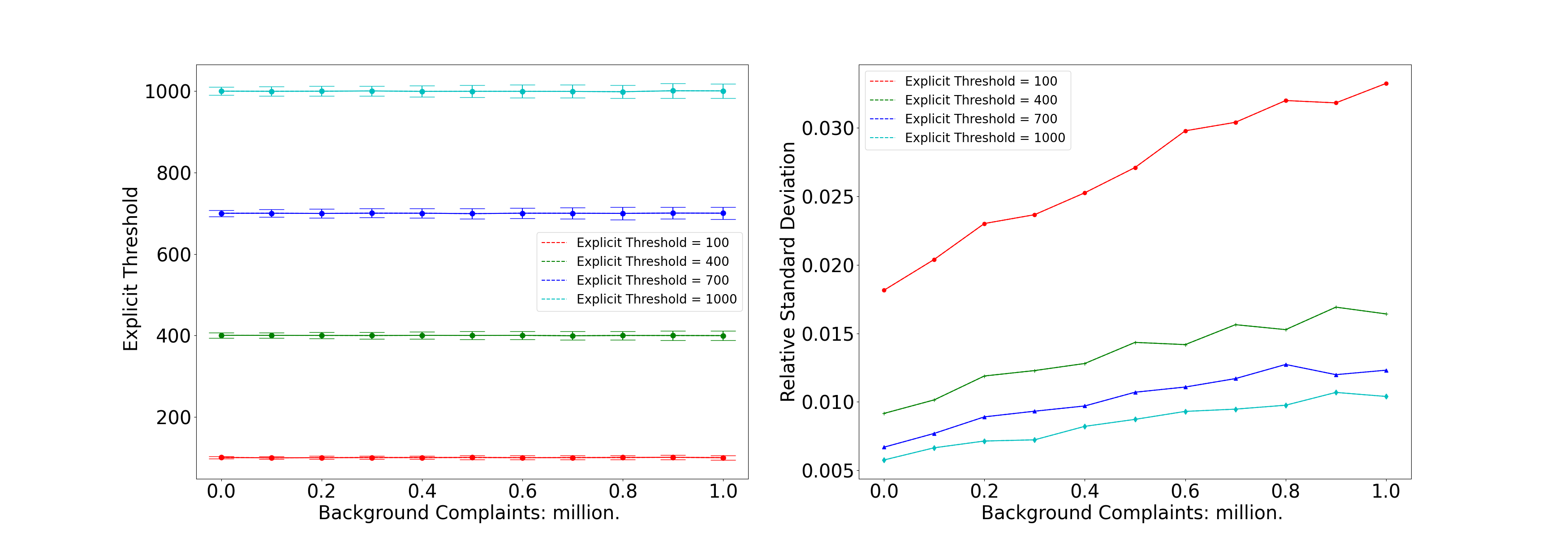}
\centering
\caption{The (left) Mean and (right) Relative Standard Deviation of Experimental Explicit Threshold vs. The Number of Background Complaints.}
\label{fig:Accuracy}
\end{figure*}

% \textbf{Replacing words used in plots and marked in the paragraph to the correct terminology}

% \alert{figure y should be relative errors on two different noise levels}

\subsection{Performance overhead}

Our next set of experiments measures the performance overhead of FACTS as a function of the threshold to start an audit.  Specifically, we measure the overhead of sending a message using FACTS, and the cost of issuing a complaint.  We note that for the message sending cost, we do not measure the cost of the EEMS communication, instead only measuring the added overhead due to FACTS.

For these experiments, we implemented both the client and server using the Rust programming language.  We used SHA-3 for a hash function, and for encryption and signatures we used a popular Rust library's implementation of OpenSSL's ChaCha20-Poly1305 protocol and Ed25519 respectively. To instantiate the CCBF, we used a simple library that allows memory to be bit addressed, rather than byte addressed, which gains us a quick, compact way to store the CCBF data structure. 

To simulate network overhead, we implemented a simple web server and client, which communicated over a (simulated) 8 Mbps network with a latency of 80ms, using TLS 1.3.  Since we are only measuring the overheads of FACTS over the underlying EEMS, our measurements 
did not include the time to send the message over the EEMS, nor the time to establish the TLS connection. All experiments were run on a 4.7Ghz Intel Core i7 with 16GB of RAM, with a sample size of 100 for each metric.  As in the accuracy experiments, we set $n=1,000,000$ and threshold varying from 100 to 1000, with the remaining parameters determined by Corollary~\ref{cor:params}.

For our measurement of message origination we looked at the cost of originating and sending a message of size 2MB. Creating and sending such a message with the encrypted hash and identity took 98ms, which indicates that the major bottleneck in this process is the 80ms network latency. We see then that when a user wishes to forward a message, they will still call $\orig(A,x)$, but then forward the original message whereas in an EEMS this would just require a forward.  Thus, the overhead of FACTS on a forward is slightly less than 100ms.

\begin{figure}
  \includegraphics[width=\linewidth]{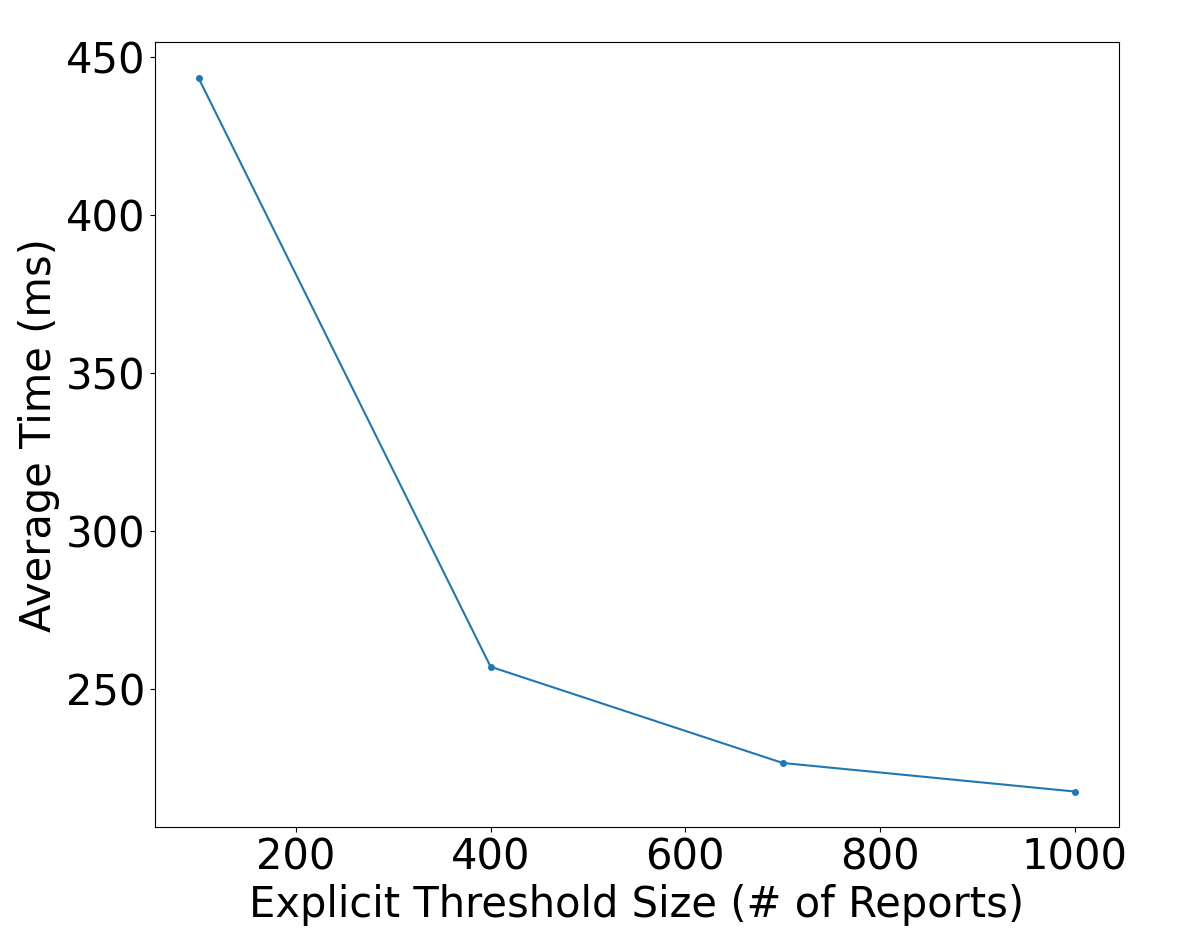}
  \caption{Complaint overhead as a function of threshold}
  \label{fig:report_msg_threshold}
\end{figure}

Figure~\ref{fig:report_msg_threshold} shows our measurements of the time to issue a complaint as a function of the audit threshold.  The time for this is dominated by the time to retrieve the user set (i.e., the bits that the user can write) from the server.  Since the size of this set $u=O(n/t)$, this time grows inversely with the threshold $t$.  Thus, as the threshold increases, the total complaint time decreases very quickly going down to essentially just the network latency when $t=1000$.

These experiments show that both the (added) cost of sending messages and the cost of complaints (for sufficiently large $t$) are dominated by the networking costs.  Thus, as long as the latency of the network is reasonable, FACTS can scale to millions of complaints per day.
\section{Security of FACTS}\label{sec:proofs}
In this section we analyze the security of FACTS.  We provide security definitions capturing the privacy and integrity guarantees provided by FACTS and prove that our protocols described in Section~\ref{sec:construction} achieve these definitions.

\subsection{Adversary Model}
We consider two different types of adversaries against FACTS.  The first is an honest-but-curious server $S$.  Such a server may also collude with some of the users.  However, all such users, as well as the server, will follow the protocol.  This adversary class models what the FACTS server learns in running the system, so we want to limit what the server learns.  However, we have to assume that the server acts honestly, as a malicious server can fully break the integrity and availability of FACTS.  For example, since the server produces the signatures binding originators to messages, a malicious adversary with knowledge of this key could arbitrarily assign originators by forging this signature.

We also consider a second type of adversary controlling a group of malicious users who do not collude with the server.  Such users may want to violate the confidentiality of FACTS by learning extra information about messages or complaints, beyond what they learn through the messages they validly receive.  Or, they may want to break the integrity of the complaint and audit mechanism of FACTS to blame innocent parties for audited messages, or to delay or speed-up the auditing of targeted messages.  This models an external adversary, say a malicious company or government, who may want to distribute fake information without being audited or may want to block certain information or users from the system.

\subsection{Privacy}
We begin by looking at the privacy guarantees provided by FACTS.

\paragraph{Privacy vs. Server:}
We first give a definition for privacy against a semi-honest server who may also collude with some semi-honest users.  In this setting we aim to argue that unless a message is audited or is received by an adversarial user, the server learns no information about the message or the complaints on the message.  In particular, the server should not be able to tell whether any message is a new message or a forward and how many, if any, complaints this message may have.  In fact, the only thing that the server learns is the \emph{metadata} of who is sending messages to whom and who is issuing complaints, but not anything more.

Specifically, 
we propose a real-or-random style definition to
capture privacy against the server.  This definition captures the fact that the view of the server (and colluding users) until a message is audited or received by a colluding user just consist of random values, and thus is independent of the messages and complaints.

Concretely, we define the following game between an adversary $\A$ controlling the server (and possibly some colluding users) and a challenger.

\begin{enumerate}
    \item[] \hspace{-2em} \underline{$\GameSP$:}
    \item The challenger runs $\setup(c)$ to set up the EEMS with $c$ clients.  He hands all keys corresponding to corrupted parties to $\A$
    \item $\A$ chooses a sequence of messages $((\mathrm{send}, A_0,B_0,\tag_{x_0}, x_0),\ldots,(\mathrm{send}, A_\ell,B_\ell,\tag_{x_\ell}, x_{\ell}))$\footnote{We note that since $S \in \A$, $\A$ can produce valid-looking tags for each of these messages by producing the necessary signatures.}, and a sequence of complaints  $((\mathrm{complain}, C_0, \tag_{x^c_0}, x^c_0), \ldots, (\mathrm{complain}, C_{\ell'}, \tag_{x^c_{\ell'}})$ and interleaves them arbitrarily. We require that none of the sending users ($A_i$), receiving users ($B_i$), or complainers ($C_i$) are controlled by $\A$.
    \item The challenger chooses $b \gets \bool$ and does the following:
    \begin{enumerate}
        \item If $b=0$, Run the $\send$ and $\complain$ protocols with inputs supplied by $\A$, giving $\A$ the resulting server view.
        \item If $b=1$, 
        \begin{itemize}
            \item for each $\send$ command, choose $r \gets \bool^\secparam$ and send this to $S$.  Choose $x' \gets \bool^{|x|+|\tag_x|}$ and send $x'$ from $A_i$ to $B_i$ using EEMS.$\sendee$.
            \item The challenger maintains a set $\USED \subseteq [s]$\footnote{Recall that $s$ is the size of the CCBF bit vector $T$}.  For each $\complain$ command, the challenger chooses $ind \gets [s] \setminus \USED$, sends $ind$ from $u_i$ to $S$, and adds $ind$ to $\USED$.
        \end{itemize}   
    \end{enumerate}

    \item $\A$ outputs a bit $b'$
    \item We say that $\A$ has advantage $$\AdvSP = |\Pr[b=b']-1/2|.$$
\end{enumerate}

\begin{definition}[Privacy vs. Server]
A FACTS scheme is \emph{private against a semi-honest server} if the adversary has a negligible advantage in the game above $\AdvSP \le \negl(\secparam)$
\end{definition}

\begin{theorem} FACTS is private against a semi-honest server
\end{theorem}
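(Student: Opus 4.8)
The plan is a standard sequence-of-games argument. Let $G_0$ denote $\GameSP$ with $b=0$ and build games $G_1,G_2,G_3$ with $G_3$ equal to $\GameSP$ with $b=1$, bounding each consecutive distinguishing gap by a negligible quantity; the triangle inequality then yields $\AdvSP \le \negl(\secparam)$. A preliminary remark disposes of collusion: the definition forbids any corrupted user from being a sender $A_i$, receiver $B_i$, or complainer $C_i$ in the challenge, so the keys and internal state of the corrupted users are independent of the challenge and can be simulated internally by $\A$; it therefore suffices to analyze what the server learns from the honest parties' transmissions.

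\emph{Game $G_1$ (random oracle).} In every invocation of $\orig$ triggered by a challenge $\send$ (including the discarded call made on a forward), replace the value $h=H(r\|x)$ that the originator sends to the server by a fresh uniformly random string of the appropriate length, matching the $b=1$ branch. Since $H$ is a random oracle and each salt $r$ is an independent uniform element of $\bool^\secparam$ that never enters $\A$'s view in $G_1$, the games $G_0$ and $G_1$ are identical unless $\A$ queries $H$ at some $r_i\|x_i$ or two salts collide; a union bound over $\A$'s polynomially many queries and the originations bounds the gap by $\poly(\secparam)/2^{\secparam}=\negl(\secparam)$.

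\emph{Game $G_2$ (EEMS confidentiality).} Replace the plaintext $(\tag_{x_i},x_i)$ transmitted from $A_i$ to $B_i$ over the underlying EEMS in each challenge $\send$ by a uniformly random string $x_i'$ of length $|x_i|+|\tag_{x_i}|$, again matching $b=1$. After $G_1$ the salt $r_i$, and hence the whole tag $\tag_{x_i}=(r_i,e_i,\s_i)$, appears in $\A$'s view only inside this encrypted and authenticated EEMS payload, and since $A_i,B_i$ are honest $\A$ does not possess their EEMS keys. A hybrid over the messages, each step reducing to the confidentiality (IND-CPA) of the EEMS encryption, bounds the gap by $\negl(\csecparam)$. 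Note that the adversarial server still computes $e_i,\s_i$ itself in both games, consistent with the fact that these are simply discarded by the challenger when $b=1$.

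\emph{Game $G_3$ (complaint indices), and the main obstacle.} Replace the index $i$ that each complainer sends to the server inside CCBF.$\Increment(\tag_{x_i},C_i)$ by the index drawn as in the $b=1$ branch. This is the crux of the proof and the step I expect to be the main obstacle: it requires a careful proof of item-obliviousness of the CCBF that holds conditioned on everything a semi-honest server observes. Concretely, conditioned on the server's current view --- which already includes every static user set $U_C$ (recomputed by the server in step~1 of $\Increment$) and the current table $T$ --- the index actually sent by a real $\Increment$ must be shown statistically close to a fresh unused slot. The intuition is that each item set $V_x$ is a uniformly random $v$-subset about which the server has no information, so the selection rule ``uniform element of $S_C\cap V_x$, or of $S_C$ when that intersection is empty'' collapses, after averaging over $V_x$, to a distribution that depends only on data the server already holds and reveals nothing about which item is complained on. The delicate points are (i) accounting for the correlation between the already-set bits of $T$ and the hidden set $V_x$ induced by prior complaints on the \emph{same} item, (ii) handling the $\mathbf{abort}$ case $S_C=\emptyset$ consistently between the two games in the relevant throughput regime, and (iii) reconciling the precise support of the ideal index with the user set the server can see. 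Once this is established the gap is $\negl(\secparam)$, and since $G_3$ is exactly $\GameSP$ with $b=1$, summing the three gaps gives the theorem. The first two games are routine random-oracle and IND-CPA hybrids.
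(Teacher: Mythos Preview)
Your plan is correct and follows the same approach as the paper's proof sketch: random-oracle freshness of $h=H(r\|x)$ for the $\send$ side, and a deferred-sampling argument for item-obliviousness of the CCBF index on the $\complain$ side. The paper's sketch is terser---it does not isolate your EEMS-confidentiality hybrid $G_2$ as a separate step---and it dispatches your obstacle (iii) in one line by deferring \emph{both} $U_C$ and $V_x$: first pick a uniformly random $0$-index in $T$, then choose $U_C$ and $V_x$ conditioned on containing it, which is justified because both sets are derived via the random oracle $H$ and the adversary has not made the relevant query except with negligible probability.
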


\begin{proofsketch}
First, consider the server's view on a $\send$ command.  This view consists of a message $h = H(r||m)$ for $r \la \bool^\secparam$ and the leakage from EEMS.$\sendee$, i.e., the identities $A$ and $B$, as well as $|(\tag_x,x)|$.  Since the challenger uses the same sender, receiver, and message length, the only thing left to prove is that $h$ is indistinguishable from random.  Since $r$ is chosen uniformly at random, and $H$ is a random oracle, $H(r||m)$ is uniformly random to $\A$ unless $\A$ queries $H(r||m)$.  However, since $\A$ makes at most $\poly(\secparam)$ queries to $H$, the probability that he makes this query is at most $\poly(\secparam)/2^{\secparam} \le \negl(\secparam)$.

Next, we consider the $\complain$ commands.  The server's view on a complaint consists of the complainer's ID $C$ and an index in the CCBF to flip to 1.  In a real execution of $\complain$, this index is chosen at random from the set $S_C \cap V_x$ where $S_C = \{i\in U_C \mid T[i] = 0\}$ and $V_x$ is the list of item locations for $x$.\footnote{Technically, the item used in the CCBF is the tag $\tagg_x$, but we use $x$ here for ease of notation.}  
However, since $U_C$ and $V_x$ are chosen at random, we can equivalently sample a random 0-index in the bit vector $T$ and then choose $U_C$ and $V_x$ conditioned on them containing this location.  Hence the location sent to the server is uniformly random unless $\A$ makes the corresponding $H$ query, which only happens with $\negl(\secparam)$ probability.
\end{proofsketch}

The above theorem states that, beyond the meta-data of who sent a message to whom and who has sent complaints and when, FACTS reveals no information about messages and complaints to a semi-honest server until an audit occurs (or a malicious user receives a message).  Moreover, the view of the server is completely random when conditioned on the meta-data.  Now, suppose that a message $x$ is audited (or is received by an adversary-controlled user).  When this happens, the adversary learns the tag and message $(\tag_x,x)$.  This enables $\A$ to learn the identity of the originator (by decrypting it from $\tag_x$) and to learn the entire history of this message, i.e., the transmission and complaint history of $x$.  However, since the server's view of all other messages is indistinguishable from independent random strings (modulo the meta-data), the adversary does not learn anything more about these messages as a result of an audit on $x$. 

\paragraph{Privacy vs. Users}
We now proceed to analyze security of our protocol against (possibly malicious) users that are not colluding with the server.  This models the case of a third party adversary that tries to learn information about the messages and complaints in FACTS.  Here, we no longer assume that a message $x$ is never received by a malicious user and thus we cannot use a real-or-random style definition as before.  Instead, we argue that a user cannot distinguish a new message from a forwarded message unless another corrupted user has previously seen that message.  This also shows that a malicious user cannot learn the identity of the message originator.  Since users do not receive any communication on complaints, we only consider message privacy here.

Concretely, we define the following game between an adversary $\A$ controlling a set of users, and a challenger.

\begin{enumerate}
    \item[] \hspace{-2em} \underline{$\GameCP$:}
    \item The challenger runs $\setup(c)$ to set up the EEMS with $c$ users and gives all key material for the corrupted users to $\A$. Let $B \in \A$ be a user controlled by the adversary.
    \item $\A$ chooses messages $x, x'$ s.t. $|x|= |x'|$ and honest users $O, A \notin \A$
    \item The challenger chooses $b \mid \bool$ and does the following:
    \begin{enumerate}
        \item If $b=0$, the challenger runs $\send(O,A,\perp,x')$ and $\send(A,B,\perp,x)$ with $\A$ receiving the view of $B$.
        \item If $b=1$, the challenger runs $\send(O,A,\perp,x)$ and $\send(A,B,\tag_x, x)$ (where $\tag_x$ is the tag received by $A$ from $O$).
    \end{enumerate}
    \item $\A$ outputs a bit $b'$
    \item We say that $\A$ has advantage $$\AdvCP = |\Pr[b=b']-1/2|.$$
\end{enumerate}

\begin{definition}[User privacy]
A FACTS scheme achieves \emph{privacy against malicious users} if the adversary has a negligible advantage in the game above $\AdvCP \le \negl(\csecparam)$
\end{definition}

\begin{theorem}
FACTS achieves privacy against malicious clients.
\end{theorem}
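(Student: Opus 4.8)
The plan is to reduce distinguishing the two experiments in $\GameCP$ to breaking one of the two cryptographic primitives used in the $\orig$ protocol: the IND-CPA security of the encryption scheme $\Enc$ used to hide the originator identity, and the one-wayness / hiding of the salted hash $H$ modeled as a random oracle. The key observation is that the only place where the bit $b$ affects the view of the corrupted user $B$ is inside the two tags that $B$ gets to see: in the case $b=0$, $B$ receives a freshly generated tag $\tag_x$ produced by $\send(A,B,\perp,x)$, which contains $e_0 \gets \Enc_{PK_S}(A)$ and a salted hash $h_0 = H(r_0 \Vert x)$ for fresh $r_0$; in the case $b=1$, $B$ receives the tag $\tag_x$ that $A$ obtained from $O$, which contains $e_1 \gets \Enc_{PK_S}(O)$ and $h_1 = H(r_1 \Vert x)$ for the fresh $r_1$ that $O$ chose. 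Everything else in $B$'s view — the EEMS ciphertexts, the message contents $x$, the server's signature on $(h \Vert e)$ — is identically distributed in both worlds (note $O$ and $A$ are honest and $B$ does not see the server's secret key, so $B$ cannot strip the signature open, and in both worlds the signed payload has the same format).

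I would proceed by a short hybrid argument over these two components. First, a hybrid $H_1$ that replaces the ciphertext $e$ inside the tag $B$ sees with an encryption of a fixed dummy identity (say $0$); indistinguishability of $H_1$ from the real game ($b=0$) and from $H_1$ to $b=1$ both follow from IND-CPA security of $\Enc$, since $B$ never holds $SK_S$ and the reduction can simulate everything else (the challenger plays the role of the honest users $O,A$ and can obtain the server's signature by running $\setup$ itself and keeping $SK_S$). This step shows the encrypted-originator component carries no advantage. Second, a hybrid $H_2$ that replaces the salted hashes $h_0, h_1$ by uniformly random strings; since $r_0$ (resp. $r_1$) is sampled uniformly from $\bool^\secparam$ and $H$ is a random oracle, $B$'s view is identically distributed to one in which $h$ is uniform unless $B$ queries $H$ on the point $r \Vert x$ — and since $r$ is information-theoretically hidden from $B$ (it appears only inside $\tag_x$, but $B$ does receive $\tag_x$... ) — here I need to be careful: $r$ is actually part of $\tag_x$ which $B$ receives. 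So instead the argument is that in world $b=0$ versus $b=1$ the message $x$ under the hash is the same; the only difference is the salt value and identity, already handled. Re-examining: with $|x|=|x'|$ the only real difference is which message is hashed when $b$ flips — but in both sub-cases of $\GameCP$ the \emph{same} $x$ is hashed in the $A\to B$ tag (in $b=0$ it is $H(r_0\Vert x)$ fresh, in $b=1$ it is $H(r_1 \Vert x)$ from $O$); in both, $x$ is hashed, so $h$ is identically distributed given that salts are fresh uniform and independent. Hence the hash component is \emph{perfectly} identical and contributes zero advantage.

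The upshot is that after the single IND-CPA hybrid step the two worlds are statistically identical (the remaining differences — which honest party originated, embodied only in the now-dummy ciphertext — have been erased), so $\AdvCP \le \mathsf{Adv}^{\mathrm{IND\text{-}CPA}}_{\Enc}(\mathcal{B}) + \negl(\csecparam)$ for an efficient reduction $\mathcal{B}$, which is negligible. The main obstacle I anticipate is getting the bookkeeping of the reduction exactly right: the challenger must be able to produce a valid server signature $\Sig_{SK_S}(h \Vert e)$ on the \emph{challenge} ciphertext $e$ it received from the IND-CPA challenger without knowing its plaintext, which is fine since signing only needs $(h\Vert e)$ as a bitstring and the challenger owns $SK_S$; and one must double-check that in world $b=1$ the tag $A$ forwards is genuinely the one $O$ generated (so that its $e$ component is an encryption of $O$, not $A$) — this is exactly the "forward" branch of $\send$, and it is the reason a malicious $B$ cannot learn the true originator. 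Once these wiring details are pinned down, the rest is routine.
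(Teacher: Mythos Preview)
Your proposal is correct and takes essentially the same approach as the paper: the only difference in $B$'s view between $b=0$ and $b=1$ is whether the ciphertext $e$ inside $\tag_x$ encrypts $A$ or $O$, and IND-CPA security of $\Enc$ makes these indistinguishable. Your detour through a random-oracle hybrid on $h$ is unnecessary (as you eventually realize) since in both worlds the $A\to B$ tag hashes the same message $x$ with a fresh uniform salt, so that component is identically distributed; the paper simply observes this directly and reduces to semantic security in one step, whereas you arrive there after some self-correction --- but the core argument and the reduction bookkeeping you outline (challenger holds $SK_S$ and can sign the challenge ciphertext) are exactly right.
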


\begin{proofsketch}
The view of $B$ on an execution of $\send(\cdot,B,\tag_x,x)$ consists of the received message and tag $(\tag_x, x)$ where $\tag_x = (r,e,\s)$.  Since $e$ is a semantically secure encryption of the identity of the originator, $\A$ cannot distinguish between the case when $e = \Enc(A)$ (when $b=0$) and the case when $e = \Enc(O)$ (when $b=1$) except with advantage negligible in $\csecparam$.  Additionally, since $\tag_x$ is generated identically both when $b=0$ and $b=1$ except for this change in $e$, this means that $\tag_x$ does not help $\A$ distinguish between these two cases.
\end{proofsketch}

\subsection{Integrity}
We now turn to the integrity guarantees provided by FACTS.  We aim for a few different notions of integrity to show that malicious users cannot interfere with the complaint and audit process.  First, no adversary controlling a subset of the users should be able to frame an honest user as the originator of an audited message he did not originate.  Second, an adversary controlling a subset of the users should not be able to significantly delay the audit of a malicious message.  In particular, such an adversary should not be able to prevent a malicious message sent by one of his users from being audited.  Finally, an adversary controlling a small set of users should not be able to significantly speed up the auditing of a targeted message.  In particular, such an adversary should not be able to cause an audit without complaints from some honest users.

We begin by defining the following game between a challenger and an adversary $\A$ controlling a subset of the users to capture the inability of an adversary to forge a valid tag that it has not seen before.

\begin{enumerate}
    \item[] \hspace{-2em} \underline{$\GameUF$:}
    \item The challenger runs $\setup(c)$ to set up the EEMS with $c$ clients and gives all key material for the corrupted clients to $\A$.
    \item $\A$ requests $\send$ operations on messages of its choice both from honest and corrupted clients. ($\A$ is given the view of corrupted clients in all these executions consisting of $(\tagg_x,x)$.)
    \item $\A$ outputs a tag, message pair $(\tagg_y, y)$
    \item We say that $\A$ WINS if $\tagg_y$ is a valid tag for message $y$ with originator $O \notin \A$, and there has not been a prior command $\send(O,\cdot,\perp,y)$.
\end{enumerate}

\begin{definition}[No framing]
We say that a FACTS scheme disallows \emph{framing} if for any PPT $\A$, $\A$ WINS in the above game with probability at most $\negl(\csecparam)$.
\end{definition}
\begin{theorem}
The FACTS scheme is unforgeable.
\end{theorem}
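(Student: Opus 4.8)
The plan is to reduce no-framing to three facts: EUF-CMA unforgeability of the signature scheme $(\Sig,\Ver)$, collision resistance of $H$ (which holds since $H$ is a random oracle), and the fact that an honest originator samples each salt uniformly from $\bool^\secparam$. Throughout, recall that a winning $\A$ must output $(\tagg_y,y)$ with $\tagg_y=(r,e,\s)$ such that $\Ver_{PK_S}(\s,H(r||y)||e)=1$ and $\Dec_{SK_S}(e)=O$ for some honest $O\notin\A$, while no $\send(O,\cdot,\perp,y)$ command was ever issued.

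First I would show that a valid forged tag must come from an actual $\orig$ execution that took place during the game. Construct a reduction $\mc{B}$ against signature unforgeability: $\mc{B}$ plants its challenge verification key as $PK_S$, generates the server's encryption keypair itself (this is a separate keypair despite the shared notation), and simulates $\GameUF$ faithfully, lazily answering $H$-queries and answering the server's signing step inside each $\orig$ call by a signing-oracle query. If $\A$ wins and $H(r||y)||e$ was never queried to the signing oracle, then $\mc{B}$ forges. Hence, except with negligible probability, there is a prior $\orig$ execution $\mc{E}$ in which the server signed exactly $H(r||y)||e$; say $\mc{E}$ used salt $r'$, message $z$, and ciphertext $e'$, so $e=e'$ and $H(r||y)=H(r'||z)$.

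Next I would identify the party $P$ who ran $\mc{E}$. Since the server in $\orig$ always encrypts the \emph{authenticated} identity of the requester, $e'=\Enc_{PK_S}(P)$, hence $\Dec_{SK_S}(e)=\Dec_{SK_S}(e')=P$; combined with $\Dec_{SK_S}(e)=O$ this gives $P=O$, so the honest user $O$ itself ran $\mc{E}$. But an honest $O$ runs $\orig$ only as the sender of some $\send$ command requested by $\A$, always with a fresh random salt. If that command is $\send(O,\cdot,\perp,w)$, then either $w=y$, which is precisely the excluded command, or $w\ne y$, so $z=w\ne y$ and $H(r||y)=H(r'||z)$ is a collision among the $\poly(\secparam)$ random-oracle queries, occurring with negligible probability. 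If instead $\mc{E}$ sits inside a forward by $O$, then if the forwarded message differs from $y$ we again get a collision, and if it equals $y$ then $r=r'$ is forced up to a collision, but $O$ \emph{discards} the $\orig$ output by the forwarding semantics and never transmits $r'$, so $\A$ can match the fresh random $\secparam$-bit salt $r'$ only by guessing, again with negligible probability.

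Summing the forger's advantage, the random-oracle collision probability, and the salt-guessing probability bounds $\Pr[\A \text{ WINS}]$ by a negligible function in $\csecparam$ (and $\secparam$). I expect the main obstacle to be this last step: cleanly separating the forwarding case, where the freshly generated tag is thrown away and so stays hidden from $\A$, from the origination case, which the winning condition already forbids — together these close off the only dangerous strategy, namely recycling an honest user's own origination tag onto a new message.
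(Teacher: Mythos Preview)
Your argument is correct and follows the same core strategy as the paper's proof sketch: reduce to EUF-CMA unforgeability of $(\Sig,\Ver)$, and use the random-oracle hash to rule out that any previously signed $h\Vert e$ can be repurposed for a new message $y$. The paper's sketch is very terse---it simply observes that among the polynomially many tags $\A$ sees from $O$'s originations, none will share the same $h$ value except with negligible probability, and then invokes signature unforgeability directly. You flesh this out considerably: you make the reduction $\mc{B}$ explicit, you use decryption correctness to pin the matching $\orig$ execution to the honest user $O$ via $e$, and you separately treat the \emph{forwarding} case, where $O$ runs $\orig$ but discards the output so the fresh salt $r'$ never reaches $\A$. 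That last case is not mentioned in the paper's sketch at all, so your write-up is strictly more complete while remaining the same argument in spirit.
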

\begin{proofsketch}
A valid tag $\tagg_y$ with originator $O$ consists of $\tagg_y = (r,e,\s)$ where $r$ is a random seed s.t. $H(r||y) = h$, $e = \Enc_{PK_S}(O)$, and $\s = \Sig_{SK_S}(h||e)$.  Thus, to frame $O$, $\A$ needs to produce a valid signature on $h||\Enc(O)$.  $\A$ can observe tags from polynomially many messages originated by $\O$, but except with probability negligible in $\secparam$ none of them will have the same value $h$.  Thus, by the unforgeability of $\Sig$, $\A$ cannot produce the necessary signature except with probability negligible in $\csecparam$.
\end{proofsketch}

Next, we give a definition that captures the ability of an adversary controlling a subset of the clients to delay the audit of a particular message.  Our goal is to show that the adversary cannot protect a malicious message from being audited.

Specifically, we define the following game,

\begin{enumerate}
    \item[] \hspace{-2em} \underline{$\GameND$:}
    \item The challenger runs $\setup$ to set up the EEMS with $c$ clients and gives all key material for the corrupted clients to $\A$.
    \item $\A$ issues a single $\send(A,B,x)$ command with $A \in \A$ to produce $\tag_x$
    \item $\A$ outputs a list of $\complain$ commands with at most $n$ total complaints, of which at least $\ell$ are complaints on $\tag_x$.
    \item The challenger runs the specified complaint commands, and then runs $\audit(A,\tag_x,x)$
    \item We say that $\A$ WINS if this audit is not successful (i.e., the audit threshold is not reached). 
\end{enumerate}

\begin{definition}[No delay]
We say that a FACTS scheme is \emph{$\ell$-audit delay resilient} for integer $\ell < n$ if for any PPT $\A$, $\A$ WINS in the above game with probability at most $\negl(\secparam)$.
\end{definition}

\begin{theorem}
The FACTS scheme is $\ell$-audit delay resilient for any $\ell \ge 1.1t+.4\secparam + .7\sqrt{\secparam t}$.
\end{theorem}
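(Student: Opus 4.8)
The plan is to reduce the statement directly to the false-negative tail bound, \cref{thm:fn}. First I would unwind the definitions. Since the challenger executes $\audit(A,\tagg_x,x)$ honestly and $\tagg_x$ was produced by a genuine $\send$ call (so $S$'s verification inside the audit passes), the audit succeeds exactly when its internal call to CCBF.$\TestCount(\tagg_x,t)$ returns \textbf{true}; hence $\A$ WINS precisely when that call returns \textbf{false}, and it suffices to bound the probability of this event after the challenger has run $\A$'s complaint list. By the description of $\complain$, each of the (at least) $\ell$ complaints on $\tagg_x$ is exactly one call to CCBF.$\Increment(\tagg_x,\cdot)$ for the common item $\tagg_x$, while the remaining at most $n-\ell\le n$ complaints are $\Increment$ calls for other items; so the total number of $\Increment$ calls is at most $n$, and the number $m$ of set bits in $T$ satisfies $m\le n$. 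Provided the CCBF parameters meet the conditions of \cref{lem:lowerp,lem:uppertau} (for instance as in \cref{cor:params}), the tipping point computed by $\TestCount$ then satisfies $\tau\le\Cfour{} t$ because $m\le n$, and the per-increment lower bound of \cref{lem:lowerp} is available.

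The second step is to check that the adversary's extra powers over a plain CCBF — choosing which (honest or corrupted) users complain on which items and in which order, possibly deviating on its own corrupted users' complaints, and knowing $\tagg_x$ and hence its item set because $A\in\A$ — give it nothing against the audit, so that \cref{thm:fn} still applies. The bound $\tau\le\Cfour{} t$ of \cref{lem:uppertau} holds for \emph{every} $m\le n$, so no choice of other complaints can inflate $\tau$. The per-call guarantee of \cref{lem:lowerp} — that a call to $\Increment(\tagg_x,C)$ sets a fresh slot of $\tagg_x$ with probability at least $\Cplow$ as long as fewer than $\tau$ of those slots are currently set — is a statement about an arbitrary fixed configuration of $T$, with the probability taken over the honest random user set $U_C$; the adversary only controls which configuration is reached when each honest complaint on $\tagg_x$ is processed, so the bound applies to each of the $k$ such calls, where $k\ge\ell$ is their number. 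Any bits the adversary steers into $\tagg_x$'s item set merely set more of those slots, which can only help reach the threshold; bits steered elsewhere change neither $\tau$ nor the per-call bound. (A call $\Increment(\tagg_x,C)$ that aborts because $S_C=\emptyset$ is simply one of the ``failure'' outcomes of that call, which occur with probability at most $1-\Cplow$, and needs no separate handling.) Hence the hypotheses of \cref{thm:fn}, with statistical parameter $\secparam$, are met.

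Finally I would invoke \cref{thm:fn}: since $k\ge\ell\ge 1.1t+.4\secparam+.7\sqrt{\secparam t}$, the probability that $\TestCount(\tagg_x,t)$ returns \textbf{false}, i.e.\ that $\A$ WINS, is at most $2^{-\secparam}=\negl(\secparam)$, which is the claim. I expect the main obstacle to be the middle step: arguing cleanly that an adversary who controls the identities, items and order of all complaints and knows $\tagg_x$ gains no leverage on the audit beyond the ``arbitrary other increments'' already modelled in the CCBF analysis. The adversary would like to simultaneously keep $\tagg_x$'s set-slot count below $\tau$ and push $\tau$ upward, but the only way to raise $\tau$ is to raise $m$, which \cref{lem:uppertau} already bounds, and the only way to suppress the slot fills that is not already accounted for in \cref{lem:lowerp} would be to pre-set $\tagg_x$'s own slots — which is self-defeating — so neither lever is available.
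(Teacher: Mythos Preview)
Your proposal is correct and follows the same route as the paper: the paper's entire proof is the one-line observation that the claim follows immediately from \cref{thm:fn}. You have simply filled in the surrounding details the paper omits, in particular the (correct) argument that the adversary's freedom to choose complainers, items, and ordering does not escape the hypotheses of \cref{thm:fn}; note, though, that in the $\GameND$ as stated the challenger runs all the \complain{} commands honestly, so the ``deviating on its own corrupted users' complaints'' case you discuss does not actually arise.
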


\begin{proofsketch}
This follows immediately from Theorem~\ref{thm:fn}
\end{proofsketch}

Next, we define the following game to capture the 
ability of a small number of malicious users to cause the audit of some message.  Importantly, this definition also captures the case where malicious users try to audit an honest message (on which there are no complaints by honest users).  Specifically, the following game is between an adversary $\A$ corrupting at most $\ell$ users and a challenger

\begin{enumerate}
    \item[] \hspace{-2em} \underline{$\GameNSU$:}
    \item The challenger runs $\setup$ to set up the EEMS with $c$ clients and gives all key material for the $\ell$ corrupted clients to $\A$.
    \item The challenger runs a single $\send(A,B,x)$ command for $A \notin \A$ and $B \in \A$.
    \item $\A$ may issue at most $L$ $\complain$ commands per each user he controls.\footnote{Recall that FACTS enforces a limit of $L$ complaints per user per epoch.}
    \item The challenger runs the specified $\complain$ commands, and then runs $\audit(\cdot, \tagg_x,x)$.
    \item We say that $\A$ WINS if this audit is successful.
\end{enumerate}

\begin{definition}[No speed up]
We say that a FACTS scheme is \emph{$\ell$-party audit speed-up resilient} if for any PPT $\A$ controlling at most $\ell$ users, $\A$ WINS in the above game with probability at most $\negl(\secparam)$.
\end{definition}

\begin{theorem}
The FACTS scheme is $\ell$-party audit speed-up resilient for $\ell \le (t - \Cminfp\sqrt{\lambda t})/L$.
\end{theorem}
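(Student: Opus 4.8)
The plan is to reduce the statement in a single step to the false-positive bound of \cref{thm:fp}, once we observe that in $\GameNSU$ the adversary can trigger only a bounded number of increments on the challenge item $\tag_x$.

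First I would pin down that number. In $\GameNSU$ the only complaints executed are the ones $\A$ specifies — the single honest $\send(A,B,x)$ run by the challenger produces no complaint — and the honest server enforces a cap of $L$ complaints per user per epoch using the authenticated complainer identities. Since $\A$ controls at most $\ell$ users, $\A$ issues at most $\ell L$ complaint commands, hence at most $\ell L$ calls to $\mathsf{Increment}(\cdot,\tag_x)$. This bound is robust even against deviating complainers: the $\complain$ protocol never checks that the complainer actually received $\tag_x$, so once $B$ hands $\tag_x$ to the rest of $\A$ every corrupted user can complain on it (this is exactly why the bound is $\ell L$ rather than $L$, and aiming all complaints at $\tag_x$ is clearly $\A$'s best strategy), while the server's check that the submitted index lies in the writer's user set and currently holds $0$ means each complaint flips at most one previously-unset bit of $T$.

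Next, the hypothesis $\ell \le (t - 2.1\sqrt{\secparam t})/L$ gives that the actual number of $\mathsf{Increment}(\tag_x,\cdot)$ calls is at most $\ell L \le t - 2.1\sqrt{\secparam t}$, which is strictly below $t$. Instantiating the CCBF with the parameters of \cref{cor:params}, so that the hypotheses of \cref{lem:lowerp} and $v \le 8t$ hold (the total set-bit count here is $m \le \ell L < t \le n$, comfortably in range), \cref{thm:fp} then yields that $\mathsf{TestCount}(\tag_x,t)$ returns \textbf{true} — which is a false positive, the true count being strictly below $t$ — with probability at most $2^{-\secparam}$. Finally I would close the loop through $\audit$: it forwards $(\tag_x,x)$ to the server only if $\mathsf{TestCount}$ returns \textbf{true}, and the server accepts precisely when the signature inside $\tag_x$ verifies; since $\tag_x$ was produced honestly by the (honest) server during the challenger's $\send(A,B,x)$ call, it does verify, so the audit succeeds exactly when $\mathsf{TestCount}(\tag_x,t)$ returns \textbf{true}. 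Hence $\A$ WINS with probability at most $2^{-\secparam} \le \negl(\secparam)$. The step I expect to be the main obstacle is the $\ell L$ counting bound: ensuring no honest complaint slips into the tally on $\tag_x$, and that a malicious complaint cannot set more than one fresh bit of $\tag_x$'s item set, so that the number of increments on $\tag_x$ really does feed into the hypothesis of \cref{thm:fp}; everything afterward is a direct invocation of that theorem.
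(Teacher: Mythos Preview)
Your proposal is correct and follows essentially the same approach as the paper: both argue that the adversary can cause at most $\ell L$ calls to $\Increment$ on $\tag_x$, which by hypothesis is at most $t - \Cminfp\sqrt{\secparam t}$, and then invoke \cref{thm:fp} to bound the false-positive probability of $\TestCount$ by $2^{-\secparam}$. The paper's own proof is a one-line sketch (``This follows immediately from \cref{thm:fp} because each user $\in \A$ makes at most $L$ complaints''), so your version simply fills in the details the paper leaves implicit, including the connection between $\TestCount$ returning \textbf{true} and $\audit$ succeeding.
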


\begin{proofsketch}
This follows immediately from Theorem~\ref{thm:fp} because each user $\in \A$ makes at most $L$ complaints.
\end{proofsketch}

\section{Alternative FACTS} \label{sec:altFACTS}
In this section we describe several optimizations or enhancements to the basic FACTS protocol.

\paragraph{Throttling complaints.}
The FACTS system and underlying CCBF data structure assume a global limit $n$
on the number of complaints per epoch, but do not require any per-user limit
besides the natural limit of $u$, the size of the user set.

However, there is some potential for abuse by users who issue many complaints in
a single epoch: they may attempt to ``attack'' another known message by issuing
multiple complaints that set bits in that message's user set;
they may collude with others and attempt to go over the total per-epoch limit of
$n$ complaints; or they may simply attempt a denial-of-service attack to prevent
other complaints from being issued.

An simple solution to these problems is to apply a limit $\ll u$ on the maximum
number of complaints per user per epoch.  This is easy for the server to apply,
since users are authenticated during the \complain{} protocol. More nuanced limits
based on a user's reputation or longevity on the platform could also be applied.

Users with a small ``quota'' of allowed complaints per epoch could even be encouraged
to participate initially in the complaint process by forwarding questionable content
to a trusted reputable user on the system, who would then presumably apply their
own judgment and possible issue a complaint in turn. This idea is aligned with
many existing content moderation settings on (unencrypted) social media platforms.

\paragraph{Regional complaint servers.}
The most significant performance bottleneck in FACTS is the necessary global
lock on the table $T$ while a single user is waiting to download their
user set $U_C$ and reply with their complaint index. Even though the communication size
is quite small for practical settings, the inherent latency across global communications
networks may impose a challenge.

For example, if many complaining users have a round-trip latency of more than 200ms,
then the global complaint rate among all users cannot be higher than 5 complaints per
second, or some 432,000 complaints per day, regardless of any parameter settings or
chosen epoch length.

One possible solution for a large-scale platform facing this issue would be to
allow multiple local complaint servers, each with their
own CCBF table $T$, to independently operate and accumulate complaints per messages.
This makes sense, as most targeted misinformation content is local to a given country
or region, and it would still be possible for each regional server to share audited
message information with others in order to prevent spread of viral false content
between regions.

\paragraph{Third-party audits.}\label{sect:third-party-audits}
While many messaging and social media platforms currently employ their own
``in-house'' teams for content moderation, there have been some attempts at
separating the role of the server from that of auditor.

From a protocol standpoint, we can imagine a separate Server and Auditor: the former
is semi-honest, handles the encrypted messaging system and maintains the public
CCBF table $T$. The Auditor is fully honest and non-colluding,
but computationally limited; intuitively, the third-party Auditor should only
be involved once a messaged has passed the desired threshold of complaints.

The FACTS system supports this option easily with the need for any additional
cryptographic setup during origination. Because the CCBF table $T$ is globally shared
among all users as well as the Auditor, any complaining user who computes
$\TestCount{}$ on their own to see that the probabilistic threshold has been surpassed,
can then forward their complaint (i.e., the opened message) directly to the Auditor.
Being fully honest, the Auditor may hold a copy of the decryption key from origination
and use this to determine what kind of action may be necessary (such as suspending
the originating user's account, flagging the message, etc.).

While it doesn't appear idea imposes any additional interesting challenges from a cryptographic
standpoint, it could be useful for some kinds of messaging platforms.

\paragraph{Hiding message metadata.}
Our FACTS system is certainly no more private than the underlying EEMS
which is being used to actually pass messages between users. In our analysis,
we explicitly assumed that the EEMS leaks metadata on the sender and recipient of
each message, but not the contents.

However, some existing EEMS attempt to also obscure this metadata in transmitting
messages, so that the server does not learn both sender and recipient of any
message. This can trivially be accomplished by foregoing a central server and
doing peer-to-peer communication (note that FACTS may still be useful as a central
complaint repository); or using more sophisticated cryptography to
hide metadata \cite{stadium,verdict,ricochet}.

Of particular interest for us is the recently deployed \emph{sealed sender}
mechanism on the popular Signal platform \cite{signal-ss}. The goal in this
case is to obscure the sender, but not the recipient, from the server handling
the actual message transmission. We note that this concept plays particularly
well with FACTS, as the additional leakage in our protocol of the identity
of each complaining user, can be presumably correlated via timings with the
receipt of some message, but this is exactly what is revealed under sealed sender
already! Both systems thus work to still hide message sender and originator identities
(at least until an audit is performed).

However, note that recent work \cite{signal-ndss21} has shown that some timing
attacks are still possible under sealed sender, and the same attacks would apply
just as well to FACTS. But the solutions proposed in \cite{signal-ndss21} might also
be deployed alongside FACTS to prevent such leakage; we leave the investigation of
this question for future work.

% XXX nevermind
% \paragraph{PIR for partial table fetches.}

% Our FACTS protocol assumes that the CCBF table $T$ is globally shared, and this
% makes sense from a practical standpoint for reasonable settings when the table
% size does not exceed a few MB.

% To scale to much larger sizes, asking users to download and maintain a copy of
% the entire table $T$ may be infeasible. Really, all that a user needs in order to
% run the \complaint{} protocol is the size-$v$ item set for that message, plus the
% size-$u$ user set where they are allowed to write in the table. While the latter
% can be communicated without violating message-obliviousness, it is important
\section{Related Work}\label{sec:related}
\paragraph{Message Franking:}
The most common approach today for reporting malicious messages in encrypted messaging systems is \emph{message franking}~\cite{C:DGRW18, C:TGLMR19, C:GruLuRis17}.  Message franking allows a recipient to prove the identity of the sender of a malicious message.  However, message franking is focused on identifying the last sender of a message, whereas we are interested in identifying the originator.  Moreover, message franking does not provide any threshold-type guarantees to prevent unmasking of senders given only one (or a few) complaints.

\paragraph{Oblivious RAM (ORAM):}
Oblivious Random-Access Memory (ORAM)~\cite{STOC:Goldreich87, STOC:Ostrovsky90, JACM:GolOst96} allows a client to obliviously access encrypted memory stored on a server without leaking the access pattern to the server.
The standard ORAM definition assumes a single user with full control over the database.
While some important progress has been made on multi-client ORAM protocols
\cite{maffei-msmcoram,concuroram,MOSE},
these solutions are still not scalable to millions of malicious users as would be needed
for our application.

\paragraph{Oblivious Counters and Oblivious Data-Structures:}
Like CCBF, oblivious counters~\cite{EC:KatMyeOst01, FC:GohGol05} build counters that can be stored and incremented without revealing the value of the counter.  However, these techniques focus on exact counting, and do not provide efficient ways for storing large numbers of counters, as needed for our applications.  More generally, oblivious data-structures, e.g.~\cite{CCS:WNLCSS14, AC:KelSch14, SP:Shi20} 
construct higher-level data structures such as heaps, trees, etc. to enable oblivious operations over encrypted data.  However, these largely focus on higher-level applications and do not provide the compression achieved by CCBF.

\paragraph{Privacy-Preserving Sketching:}
CCBF can be viewed as a small data structure (a \emph{sketch}) for storing the counts of complaints on a large set of messages.  There has indeed been a lot of recent interest (e.g.,~\cite{PoPETS:CDKY20, CCS:WJSYG19, NDSS:MelDanDec16, CCS:JanJoh16, CCS:FMJS17, AM14, FGJ+15}) in private sketching algorithms for cardinality estimation, frequency measurement, and other approximations.  However, these works generally focus on a multi-party setting, with multiple parties running secure computation to evaluate the statistic in question.  Since our goal was to restrict ourselves to user-server communication only, such techniques do not seem applicable to our setting.

\bibliography{bib,CryptoBib/abbrev2,CryptoBib/crypto}
\bibliographystyle{plain}

\appendix

\section{Proofs for tail bound probabilistic analysis}

We now complete the proofs of lemmas and theorems in \cref{sec:tailbounds}.

We start with the following standard way to approximate numbers near
1 with exponentials.
\begin{lemma}\label{lem:expapprox}
    For any real constant $\alpha>0$, and any
    real $x$ with $0<x\le\alpha$, we have
    \[
    \exp\left(-\tfrac{1}{\alpha}\ln\tfrac{1}{1-\alpha}\cdot{}x\right)
    \le 1-x < \exp(-x).
    \]
\end{lemma}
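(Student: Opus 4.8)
The plan is to treat the two inequalities separately, since they have rather different characters, and to note at the outset the implicit restriction $0<\alpha<1$ (needed so that $\ln\tfrac1{1-\alpha}$ is well-defined and positive, and so that $1-x>0$ throughout the stated range $0<x\le\alpha$). The right-hand bound $1-x<\exp(-x)$ is just the familiar inequality $e^{t}>1+t$ for $t\ne 0$ applied with $t=-x$: the graph of $t\mapsto e^{t}$ lies strictly above its tangent line $y=1+t$ at the origin by strict convexity, and $-x\ne 0$ since $x>0$, so the inequality is strict.

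For the left-hand bound, write $c=\tfrac1\alpha\ln\tfrac1{1-\alpha}$, so that both $1-x$ and $\exp(-cx)$ are positive. Taking logarithms, $\exp(-cx)\le 1-x$ is equivalent to $-cx\le\ln(1-x)$, i.e.\ to $-\ln(1-x)\le cx$, and since $x>0$ this is equivalent to
\[
g(x)\;:=\;\frac{-\ln(1-x)}{x}\;\le\;c.
\]
The key observation is that $c=g(\alpha)$, directly from the definition of $c$, so it suffices to show that $g$ is non-decreasing on $(0,1)$; then $x\le\alpha$ immediately gives $g(x)\le g(\alpha)=c$, as desired.

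To establish monotonicity of $g$ I would use the Mercator series $-\ln(1-x)=\sum_{k\ge 1}x^{k}/k$, valid for $|x|<1$, which gives
\[
g(x)\;=\;\sum_{k=1}^{\infty}\frac{x^{k-1}}{k}\;=\;1+\frac{x}{2}+\frac{x^{2}}{3}+\cdots.
\]
Each term $x^{k-1}/k$ is non-decreasing on $[0,1)$, and the $k=2$ term $x/2$ is strictly increasing, so $g$ is (strictly) increasing on $[0,1)$; in particular $g(x)\le g(\alpha)$ whenever $0<x\le\alpha<1$, finishing the proof. (One could alternatively check $g'\ge 0$ directly, which reduces to the elementary inequality $-x-(1-x)\ln(1-x)\ge 0$ on $(0,1)$, but the series argument avoids calculus entirely.) I do not expect a genuine obstacle here; the only points requiring care are the implicit assumption $\alpha<1$ and making sure every quantity whose logarithm is taken is genuinely positive over the range in question.
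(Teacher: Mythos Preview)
Your proof is correct. You rightly observe the implicit hypothesis $0<\alpha<1$, you handle the upper bound via the standard tangent-line inequality $e^t>1+t$, and for the lower bound your reduction to monotonicity of $g(x)=\frac{-\ln(1-x)}{x}$ together with the Mercator series is clean and complete.

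As for comparison with the paper: the paper does not actually prove this lemma. It introduces the statement as ``the following standard way to approximate numbers near 1 with exponentials'' and moves on immediately, treating it as a known elementary fact. So your write-up supplies a proof where the paper gives none; there is no competing approach to compare against.
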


We also re-state this straightforward consequence of the Hoeffding/Chernoff bound on the sum of random variables:
\begin{lemma}\label{lem:hoeff}
    Let $X_1,\ldots,X_n$ be independent Poisson trials, and write $Y=\sum_i X_i$ for their sum. If $\Expect[Y] = \mu$, then for any $\delta>0$, each of
    \(\Pr(Y \ge \mu + \delta)\) and  \(\Pr(Y \le \mu - \delta)\)
    are at most \(\exp(-2\delta^2/n)\).
\end{lemma}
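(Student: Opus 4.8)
The plan is to derive this directly from the classical Hoeffding inequality for sums of independent bounded random variables, so essentially no new argument is needed beyond matching up constants. Recall Hoeffding's bound: if $X_1,\dots,X_n$ are independent with each $X_i$ supported on $[a_i,b_i]$ almost surely, and $Y=\sum_i X_i$ has mean $\mu=\Expect[Y]$, then for every $\delta>0$,
\[
\Pr(Y \ge \mu + \delta) \le \exp\!\left(\frac{-2\delta^2}{\sum_{i=1}^n (b_i-a_i)^2}\right),
\]
and the symmetric statement holds for $\Pr(Y \le \mu - \delta)$ (equivalently, apply the upper-tail bound to the variables $-X_i$, whose sum has mean $-\mu$).

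The only step is to observe that a Poisson trial is a $\{0,1\}$-valued Bernoulli random variable, with a possibly different success probability for each index, so we may take $a_i = 0$ and $b_i = 1$ for all $i$. Then $\sum_{i=1}^n (b_i - a_i)^2 = n$, and substituting into Hoeffding's bound gives $\Pr(Y \ge \mu + \delta) \le \exp(-2\delta^2/n)$ and, by the symmetric case, $\Pr(Y \le \mu - \delta) \le \exp(-2\delta^2/n)$, which is exactly the claim.

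If a self-contained proof were preferred over invoking Hoeffding by name, I would instead run the standard exponential-moment method: for $s>0$, apply Markov's inequality to $e^{s(Y-\mu)}$, factor $\Expect[e^{s(Y-\mu)}] = \prod_i \Expect[e^{s(X_i - \Expect X_i)}]$ by independence, bound each factor by $e^{s^2/8}$ via Hoeffding's lemma (valid since each $X_i$ lies in an interval of width $1$), obtaining $\Pr(Y \ge \mu + \delta) \le e^{-s\delta + n s^2/8}$, and optimize with $s = 4\delta/n$ to get $e^{-2\delta^2/n}$; the lower tail is identical using $-X_i$. There is no genuine obstacle here — the result is textbook — and the only thing that needs care is bookkeeping: confirming the invoked form of Hoeffding carries the factor $2$ in the exponent (equivalently, that a width-$1$ interval yields sub-Gaussian parameter $1/8$), and noting the convention that "Poisson trials" means independent Bernoulli variables with possibly distinct means, as is standard in the randomized-algorithms literature.
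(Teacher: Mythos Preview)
Your proposal is correct and matches the paper's treatment: the paper simply states this lemma as a ``straightforward consequence of the Hoeffding/Chernoff bound'' without giving any proof, and your derivation---specializing the general Hoeffding inequality to $\{0,1\}$-valued variables so that $\sum_i (b_i-a_i)^2=n$---is exactly the intended justification.
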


We now recall and prove the building-block lemmas from \cref{sec:tailbounds}.

\lemmalowerp*

\begin{proof}
    From \eqref{eqn:pw}, we know this probability is exactly
    $p_w = 1 - \frac{\falling{(s-u)}{w}}{\falling{s}{w}},$
    where $w = v-\tau$ is the number of unfilled slots remaining.
    Using \cref{lem:expapprox} we have
    $\frac{\falling{(s-u)}{w}}{\falling{s}{w}}
    \le \left(1-\frac{u}{s}\right)^w
    \le \exp(-uw/s)$,
    which means that
    \[p_w \ge 1 - \exp\left(-\tfrac{u(v-\tau)}{s}\right)
        = 1 - \exp\left(-\tfrac{u\tau}{s}\cdot{}\left(\tfrac{v}{\tau}-1\right)\right).
    \]
    Applying the two lower bounds on $\frac{u\tau}{s}$ and $\tfrac{v}{\tau}$
    from the lemma statement yields the claimed result.
\end{proof}

\lemmaupperp*

\begin{proof}
    Using again \eqref{eqn:pw}, the probability is exactly
    $p_w = 1 - \frac{\falling{(s-u)}{w}}{\falling{s}{w}},$
    where again $w\le v$ is the number of unfilled slots for
    item $x$. Then
    \[
        \tfrac{\falling{(s-u)}{w}}{\falling{s}{w}}
        \ge \tfrac{\falling{(s-u)}{v}}{\falling{s}{v}}
        \ge \left(\tfrac{s-u-v+1}{s-v+1}\right)^v
        > \left(1 - \tfrac{u}{s-v}\right)^v.
    \]
    Using upper bounds on $\frac{v}{s}$ and $u$ from the lemma statement, we have
    \[p_w
        < 1 - \left(1 - \tfrac{u}{s-v}\right)^v
        \le 1 - \left(1 - \CzeroUpper{}\tfrac{1}{v}\right)^v.\]
    Finally, the lower bound on $v$ from the lemma statement shows $\CzeroUpper{}/v \le \CboxUpper{}$, and so we can finally use the lower exponential bound of \cref{lem:expapprox} to obtain the stated result.
\end{proof}

\lemmauppertau*

\begin{proof}
    The tipping point $\tau$ is the expected number of slots filled in the table
    if $t$ of the $m$ total calls to \Increment{} were actually called on this
    particular item.
    
    We can divide the calls to \Increment{} into two groups: the $t$ calls for
    item $x$, and the $m-t$ calls for other items. The expected number of slots
    within $x$'s item set filled by the first group is at most
    $\Cphigh t$, from \cref{lem:upperp}.
    
    For the second group, these calls to
    \Increment{} on unrelated items are distributed uniformly at random among
    all table indices, and so their expected fraction within this item set
    is the same as their overall fraction in the table. Therefore, the expected
    number of slots filled by calls to \Increment{} on other items is at most
    \[\frac{(m-t)v}{s} < \frac{mv}{s} 
        \le \frac{\CmaxvOvert}{\CminsOvern} t.\]
    
    By linearity of expectation, we can sum these two to obtain an upper bound
    on the total expected tipping point as given in the lemma statement.
\end{proof}

Now we can proceed to the proofs of the main theorems on the accuracy of the CCBF.

\theoremfp*

\begin{proof}
    Let $\tau_t$ be the tipping point for any actual number $m\le n$ of total set
    bits in the table $T$ and for the given threshold $t$.
    And consider random variables $X_1,\ldots,X_v$ for the $v$ slots assigned to
    item $x$, where each $X_1$ is 0 or 1 depending on whether the corresponding
    slot in table $T$ is 0 or 1. We want to know the probability that the sum
    of the $X_i$'s is at least $\tau_t$, which is what would cause $\TestCount(x,t)$
    to produce a false positive.
    
    Let $k = t - \Cminfp\sqrt{\lambda t}$ be the actual number of calls to
    \Increment{} on item $x$, and write $\tau_k$ for the tipping point at threshold
    $k$. By definition and the exact calculations for $\tau_k$ outlined earlier,
    we know that $\Expect[\sum X_i] = \tau_k$.
    
    The difference between these two tipping points $\tau_t - \tau_k$ is
    the expected number of extra slots filled by $t-k$ calls to \Increment{},
    which from \cref{lem:lowerp} is at least 
    \[\Cplow(t-k) = \Cplow{}\cdot{}\Cminfp{}\sqrt{\lambda t} \ge \sqrt{\tfrac{\lambda v}{2}},\]
    where in the last step we used the upper bound on $v$ from the assumptions
    of the theorem.
    
    The variables $X_i$ are not independent, but
    they are \emph{negatively correlated}, meaning that the whenever one slot is filled,
    it only decreases the likelihood that another is filled; intuitively, this is
    because there are now fewer chances to fill the other slot.
    Therefore we can apply the Hoeffding bound in this direction
    (\cref{lem:hoeff}) to say that
    \[Pr\left(\sum X_i \ge \tau_k + \sqrt{\tfrac{\lambda v}{2}}\right)
    \le \exp(-\lambda),\]
    as required.
\end{proof}

\theoremfn*

\begin{proof}
   Writing $k$ for the actual number of complaints given in
   \eqref{eqn:fn}, we need a tail bound on the probability that,
   after $k$ calls to $\Increment{}$ on the same item $x$, there are still
   fewer than $\Cfour{}t$ slots of $x$'s item set filled in, where the latter
   constant comes from applying the upper bound on the tipping point from
   \cref{lem:uppertau}.
   
   For this, we need a lower bound on the \emph{expected} number of
   bits set after $k$ calls to \Increment{} on item $x$; from \cref{lem:lowerp}
   this is at least $\Cplow{}k$.
   
   Now we can apply the Hoeffding bound (\cref{lem:hoeff}),
   with $\mu = \Cplow{}k$ and $\mu+\delta=\Cfour{}t\ge\tau$
   to see that the probability that less than $\tau$ bits of $x$'s user set
   are flipped is at most
   \begin{align*}
    &\exp\left(-2(\Cfour{}t-\Cplow{}k)^2/k\right) \\
    \le& \exp\left(-2\left(0.38\lambda + 0.66\sqrt{\lambda t}\right)^2/k\right)\\
    \le& \exp\left(-\frac{.28\lambda^2 + \lambda\sqrt{\lambda t} + .87\lambda t}%
        {.4\lambda + .7\sqrt{\lambda t} + 1.1t}\right) \\
    \le& \exp(-.7\lambda)  \le 2^{-\lambda}.
   \end{align*}
\end{proof}

\end{document}
\endinput